\newcommand{\R}{{\mathbb R}}
\newcommand{\Z}{{\mathbb Z}}
\newtheorem {thm}{Theorem}[section]
\theoremstyle{defintion}
\theoremstyle{remark}
\newtheorem{rem}[thm]{Remark}
\theoremstyle{example}
\theoremstyle{assumption}
\def\T{\operatorname{\mathbb T}}
\def\lbl{\label}
\def\be{\begin{equation}}
	\def\ee{\end{equation}}
\def\p{\partial}
\newcommand{\1}{\mathrm{i}\mkern1mu}
\title{A global bifurcation organizing rhythmic activity in a coupled network}
\author{Georgi S. Medvedev\thanks{Department of Mathematics, Drexel
		University, Philadelphia, PA 19104, 
		{\tt medvedev@drexel.edu}} \and Matthew S. Mizuhara\thanks{Department of 
		Mathematics and Statistics,
		The College of New Jersey, Ewing, NJ 08628,
		{\tt  mizuharm@tcnj.edu}}
	\and Andrew Phillips\thanks{Department of Physics, Drexel University,
		Philadelphia, PA 19104, 
		{\tt apr384@drexel.edu}} }
\begin{document}
	\maketitle
	
	\begin{abstract}
		We study a system of  coupled phase oscillators near a saddle-node 
		on an invariant circle bifurcation  and driven by random intrinsic
		frequencies. Under the variation of control parameters,
		the system undergoes a phase transition changing the qualitative properties of 
		collective dynamics. Using the Ott-Antonsen reduction and geometric techniques for
		ordinary differential equations, we identify a heteroclinic bifurcation in a family of vector
		fields on a cylinder, which explains the change in collective dynamics.
		Specifically, we show that the heteroclinic bifurcation separates
		two topologically distinct families of limit cycles: contractible limit cycles before the
		bifurcation from  noncontractibile ones after the bifurcation. Both families are stable
		for the model at hand.
	\end{abstract}

	{ The Kuramoto model (KM) of coupled phase oscillators provides an
		important paradigm for studying collective dynamics in systems ranging
		from neuronal networks to swarms of fireflies to power grids. The classical
		KM features a remarkable phase transition separating stable
		mixing dynamics from gradual build-up of synchronization. During the latter
		phase, the oscillators form a cluster whose coherence (measured by the order
		parameter) remains approximately constant and increases with the coupling
		strength. If the uniformly rotating phase oscillators in the KM
		are replaced by those 
		close a saddle-node on invariant circle bifurcation,
		the order parameter does not stay constant anymore. Instead, it undergoes slow-fast
		oscillations. Furthermore,
		for larger values of the coupling strength the system undergoes  a
		phase transition, which changes the character of oscillations
		qualitatively. Previous studies based in part on numerical bifurcation techniques
		revealed a rich bifurcation structure of the modified KM. In this paper,
		we use the Ott-Antonsen reduction and qualitative methods for ordinary differential
		equations to study collective dynamics in the modified model with an emphasis
		on the slow-fast oscillations of the order parameter. We analytically locate the
		Andronov-Hopf and Bogdanov-Takens bifurcations for the system of equations
		governing the order parameter and identify the relevant normal form for the
		Bogdanov-Takens bifurcations. Furthermore, we relate the phase transition in the
		modified model to a nonlocal bifurcation in one parameter families of vector
		fields on a cylinder. The results of this work show that the slow-fast oscillations
		of the order parameter in the modified KM are shaped by the
		proximity to both  Bogdanov-Takens and heteroclinic bifurcations.
	}
	
	\section{Introduction}\label{sec.intro}
	\setcounter{equation}{0}
	
	The KM plays a special role in the theory synchronization. It provides a
	framework for studying synchronization and other forms of collective dynamics in systems
	of coupled oscillators with random parameters. Despite its analytical simplicity,
	studies of the KM revealed and helped to understand some very nontrivial
	phenomena in collective dynamics, which are relevant to a range of models in physics
	and biology (e.g., the onset of synchronization and chimera states \cite{Kur75, KurBat02, Str00}).
	Motivated by models featuring type I excitability in mathematical biology,
	we modify the KM by placing
	the individual oscillators close a saddle node on an invariant circle bifurcation.
	Specifically, we consider the following coupled system
	\begin{equation}\label{KM}
		\dot\theta_i=1+\omega_i -\cos\theta_i
		+Kn^{-1}\sum_{j=1}^n\sin\left(\theta_j-\theta_i\right),\quad i\in [n]:=\{1,2,\dots,n\},
	\end{equation}
	where $\theta_i:\R^+\to \T:=\R/2\pi\R$ is the state of oscillator $i$ at time $t$ and
	$K\ge 0$ is the coupling strength. Parameter $\omega_i$ controls the frequency
	of oscillator $i$, if $\omega_i>0$, or defines the excitation threshold otherwise.
	We assume that $\omega_i$'s are sampled from a unimodal probability distribution
	with density $g(\omega)$.
	Equation \eqref{KM} fits into the framework of the coupled
	active rotators model considered by Shinomoto and Kuramoto in \cite{ShiKur86}.
	In contrast to our setting, they used identical oscillators
	(i.e., $\omega_i=\operatorname{const}$) albeit forced by small noise. Different variants
	of the active rotators model with and without noise were studied more recently by other authors
	\cite{AceBon98, TesSciCol2007, ZaksNeiman2003, KlinFran2015, KlinFran2019}.
	We comment on the relation of our
	findings to the results in these papers below.

	To take a full advantage
	of the Ott-Antonsen Ansatz \cite{OttAnt08} below, we restrict to the Lorentzian distribution
	with density
	\begin{equation}\label{Lorentz}
		g(\omega)=\frac{\delta}{\pi \left(\omega-\epsilon^2\right)^2+\delta^2},
	\end{equation}
	where $\epsilon^2$ and $\delta>0$ are the location and scale parameters respectively.
	Lorentzian distribution is commonly used in the studies of the collective dynamics in the Kuramoto
	model and related systems, because it fits nicely into the Ott-Antonsen Ansatz
	(see, e.g., \cite{OttAnt09, OmeLai2022}). There has been a concern that
	due to its special properties (lack of finite moments) models based on Lorentzian distribution
	may feature nongeneric scenarios \cite{LafColTor2010}. We checked that 
	in qualitative form the bifurcation scenarios reported in this paper hold for Gaussian distribution,
	and, thus, are
	relevant to a large class of models based on unimodal probability distributions.
	If typical realizations of $\omega_i$ are $O(1)$ and positive then the dynamics of \eqref{KM} is
	practically the same as in the classical KM. Specifically, for small
	$K$ the oscillators are
	highly incoherent. Starting with a
	certain critical value of $K$, the coherence gradually increases \cite{Str00}.
	Thus, in this paper, we focus on the regime when
	typical values of $\omega_i's$ are small, i.e., when the individual oscillators are close a
	saddle-node bifurcation.

	To describe the collective dynamics of \eqref{KM} we need to
	remind the reader the definition of the Kuramoto's order parameter:
	\begin{equation}\label{order-p}
		h_n=n^{-1} \sum_{j=1}^n e^{\1 \theta_j},\quad
		h_n=\rho_n e^{\1\phi_n}.
	\end{equation}
	The modulus, $\rho_n=|h_n|$, and the  argument, $\phi_n=\operatorname{arg}h_n$, of the complex order parameter
	yield the degree of coherence and the position of the 
	center of mass of the population of oscillators on the unit circle respectively.
	The combination of these two quantitates as functions of time provides a good description
	of the collective dynamics. In the classical Kuramoto, after some transients the modulus
	of the order parameter approaches a steady value (up to $O(n^{-1/2})$ fluctuations) while
	the argument drifts with approximately constant velocity. In contrast, in the simulations
	of the modified KM \eqref{KM} we observe very nonuniform slow-fast oscillations
	in the modulus and the argument of the complex order parameter (Figure~\ref{f.order}).
	
	We now turn to describe the salient features of the  collective dynamics of \eqref{KM}.
	There are three main regimes in the
	system dynamics:
	\begin{description}
		\item[I]
		For small values of $K$, the individual oscillators
		rotate in the counterclockwise direction but
		the overall distribution of oscillators on a unit circle remains
		practically stationary. The distribution has a peak at $\theta=0$
		reflecting the fact that the oscillators slow down in a neighborhood
		of $\theta=0$ and thus spend more time there (see  Figure~\ref{f.snapshots}\textbf{a}).
		The order parameter lies on the real axis very close to $1$ and stays
		approximately constant (see inset in  Figure~\ref{f.snapshots}\textbf{a}).
		\item[II]
		There are two critical values of $K$: $0<K_{AH}<K_{HC}$, which will be shown  below 
		to correspond to an Andronov-Hopf bifurcation and a heteroclinic bifurcation respectively.
		For $K\in (K_{AH}, K_{HC})$ the order parameter undergoes slow-fast oscillations
		(see Figure~\ref{f.order}). Importantly, the argument of the order parameter stays
		between $\frac{-\pi}{2}$ and $\frac{\pi}{2}$ (see the insets in Figures~\ref{f.snapshots}\textbf{b},\textbf{c}).
		This means that the center of mass of the population of oscillators oscillates around $\theta=0$.
		The modulus of the order parameter varies between values close to $0$ and $1$ 
		spending more time in the region near $1$ (Figure~\ref{f.order}).
		Two snapshots of the distribution of the oscillators on a unit circle are shown in
		Figure~\ref{f.snapshots}\textbf{b} (coherent phase) and Figure~\ref{f.snapshots}\textbf{c} (incoherent phase).
		\item[III]
		Two representative episodes of the system dynamics for larger values of $K$ are shown in
		Figure~\ref{f.snapshots}\textbf{d} and \textbf{e}. As before, the oscillators slow down and
		accumulate as they approach the origin and accelerate and spread around once they have
		passed it. However, there is an important distinction from the previous regime.
		It is quite pronounced in numerical simulations and can also be seen from the static snapshots
		in Figure~\ref{f.snapshots}. Recall that for small
		values of $K$ ($K<K_{HC}$) the order parameter never leaves the right half plane.
		For larger $K$ ($K>K_{HC}$),  on the other hand, the order parameter makes a full revolution around the origin
		in one cycle of oscillations (see the insets in Figure~\ref{f.snapshots}~\textbf{d},\textbf{e}).
		The change in the qualitative character of oscillations can be seen from the timeseries of
		$\phi_n$ for two different values of $K$ in Figure~\ref{f.order}~\textbf{b}. Note that for small $K$,
		$\phi_n$ undergoes small oscillations, while for large values of $K$, the range of $\phi_n$ covers
		the entire circle. One can compare this to small oscillations versus full swing revolutions
		of the pendulum. Below, we show that indeed the two regimes are qualitatively (topologically)
		distinct and are separated by a bifurcation.
	\end{description}

	In contrast to the classical scenario of transition to synchronization in the original KM, where
	one observes the formation of a single coherent cluster drifting uniformly around the unit
	circle \cite{Kur75, Str00}, in the modified model \eqref{KM}, we see pronounced oscillations in the argument of the
	order parameter (Figure~\ref{f.snapshots} \textbf{b}-\textbf{e}). The amplitude of these oscillations increases until the oscillations are
	transformed into a rotational motion of the center of mass of the
	population of the oscillators around the unit circle (Figure~\ref{f.snapshots} \textbf{d},\textbf{e}).
	Another salient feature of this transtion is the slow-fast character of the oscillations of the
	order parameter (Figure~\ref{f.order}).

	Periodic regimes in macroscopic dynamics in the model of active rotators were
	described already in \cite{ShiKur86}. These regimes were found by numerical simulations.
	Pulsating oscillations of the order parameter similar to those shown in Figure~\ref{f.order}
	were 
	reported in the subsequent studies of  the coupled active rotators 
	with and without
	noise \cite{KlinFran2015, TesSciCol2007, ZaksNeiman2003}. In \cite{KlinFran2015, TesSciCol2007},
	the authors derived a system of differential equations for the modulus $\rho$ and the 
	argument $\phi$ of the order parameter \eqref{order-p} in the limit as $n\to\infty$. Using
	the combination of self-consistent analysis and numerical bifurcation techniques, both studies
	revealed an Andronov-Hopf, Bogdanov-Takens, and  homoclinic bifurcations in the parameter
	regime relevant to transition to synchronization. Similar results were obtained for
	a coupled system of noisy rotators in \cite{ZaksNeiman2003} albeit via a different approximation
	procedure.
	Previous studies reveal a rich bifurcation structure underlying macroscopic dynamics
	in the coupled model.
	In this paper, we focus on the origins of the slow-fast oscillations of the order parameters and,
	in particular, on the transition from the oscillatory to rotational motion of the order parameter. 
	We use the qualitative methods  for ordinary differential equations to  elucidate the bifurcation
	structure of the coupled model\footnote{Note that our setting is slightly different from those of
		models in
		\cite{KlinFran2015, TesSciCol2007, ZaksNeiman2003}.}.
	Specifically, we locate the Andronov-Hopf and Bogdanov-Takens
	bifurcations analitically and identify the relavant normal form for the Bogdanov-Takens
	bifurcation.
	Furthermore, we relate the transition to the rotational motion to a heteroclinic bifurcation
	of a family of vector fields on a cylinder.  {A related global bifurcation separating
		oscillations from rotations of  the order parameter in the model of noisy active
		rotators was reported
		(but not analyzed) in \cite{ZaksNeiman2003}. In Section~\ref{sec.heteroclinic},
		we provide a detailed analysis of the heteroclinic bifurcation for the model at hand.
	}

	There is an extensive literature on the KM. Early papers
	were devoted mostly to synchronization (cf.~\cite{Kur75, StrMir91, StrMir92, Str00}). The scope of
	more recent contributions encompasses  rigorous
	mathematical aspects of synchronization \cite{Chi15a, Die16, ChiMed19a},
	complex spatiotemporal patterns \cite{PanAbr15, Ome18}, generalizations or adaptations of
	the classical KM \cite{CMM18, ChiStr08, Laing09}, as well as
	applications to physical and biochemical systems  (see, e.g., \cite{ChiExp-1,ChiExp-2,ChiExp-3}).
	Our work is close in spirit to \cite{ChiStr08, WilStr06} where the modifications of the KM
	were used to tackle challenging questions about collective dynamics. Specifically, we wanted to understand
	collective dynamics in populations of type I excitable oscillators 
	with randomly distributed intrinsic frequencies.
	Coupled networks of this type
	have been studied in computational neuroscience (see, e.g.,
	\cite{ErmKop84, Erm96, MedZhu12}) albeit in
	settings that differ from that adopted here. In particular,
	methods for studying weakly coupled networks (cf. \cite{HopIzh97})
	do not apply  to systems
	with random parameters like \eqref{KM}.
	For coupled systems with different coupling type,
	collective dynamics in closely related models of coupled theta neurons
	was analyzed in \cite{Laing14, JHM21, OmeLai2022}.
	We believe that the analysis of the modified KM
	used in this paper
	complements the previous studies of coupled active
	rotators in \cite{KlinFran2015, TesSciCol2007, ZaksNeiman2003} and 
	brings new insights to understanding collective dynamics of type I oscillators. 
	\begin{figure}
		\centering
		{\bf a}\includegraphics[width=	.45\textwidth]{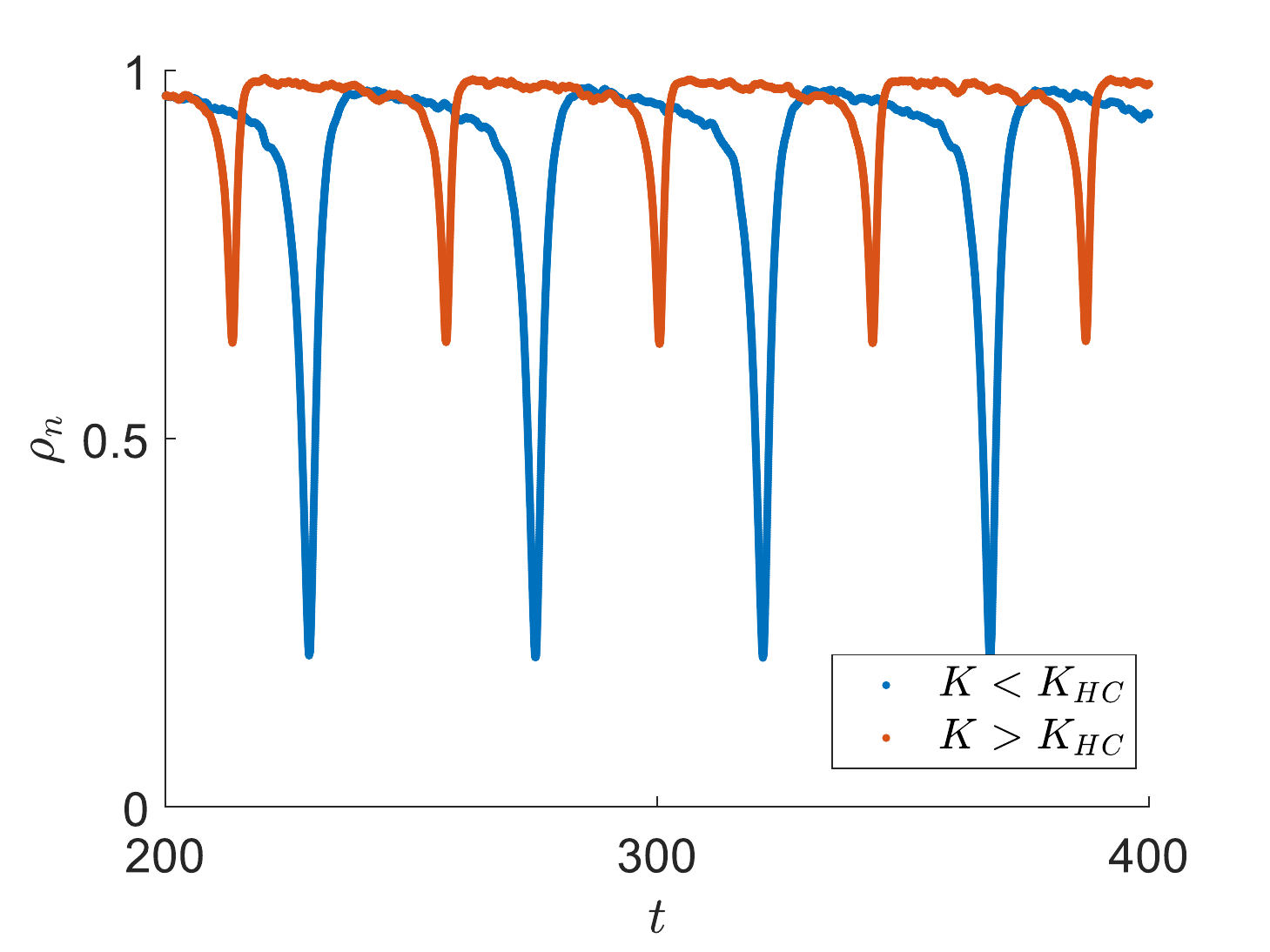}
		{\bf b}\includegraphics[width=	.45\textwidth]{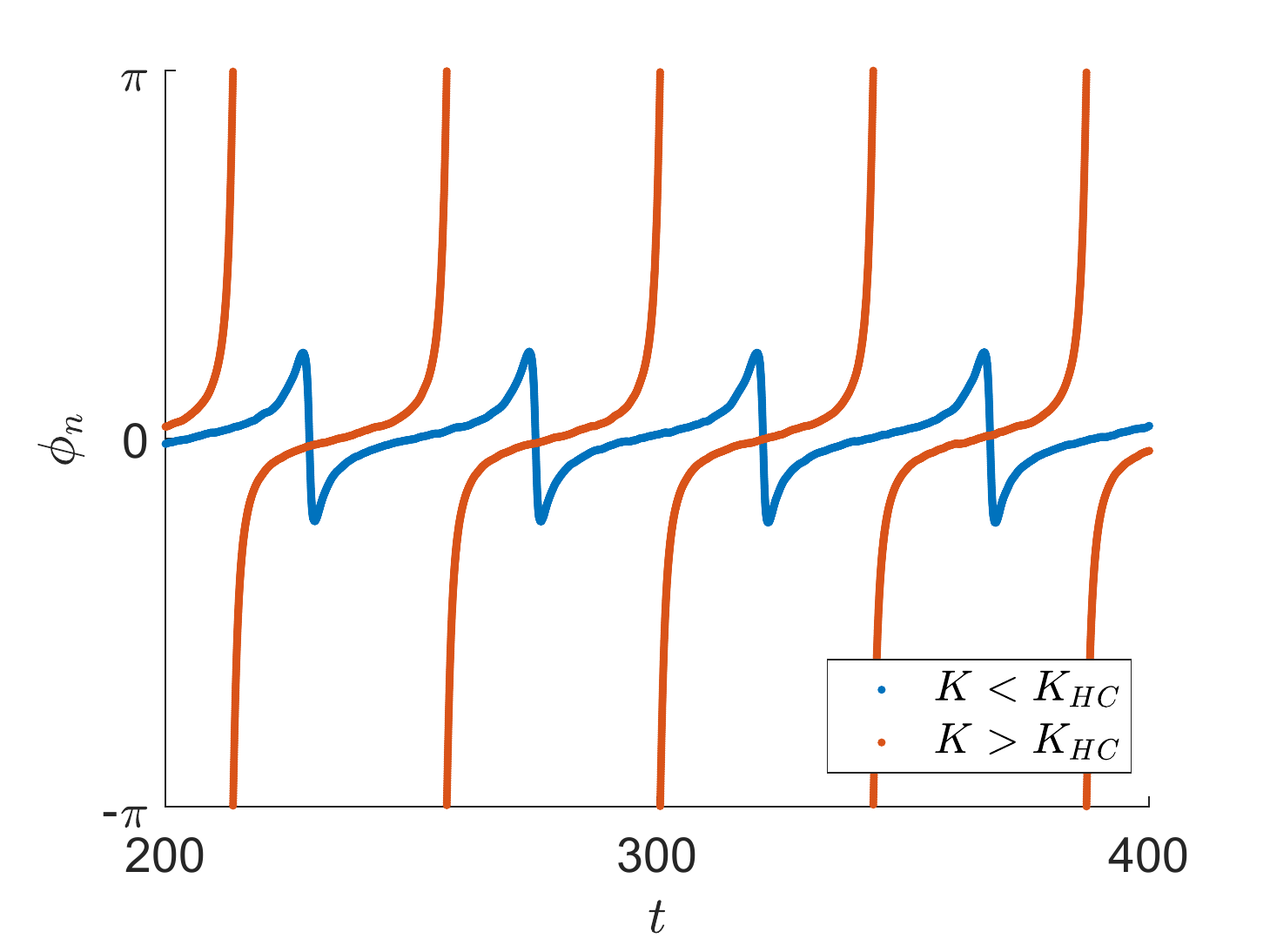}
		\caption{
			The time series of the modulus, $\rho_n$, and the argument, $\phi_n$,
			of the complex order parameter plotted for $K=0.23$ (blue) and $K=0.4$ (orange) with $n=2000$.
			Note that in \textbf{b} the oscillations of $|\phi_n|$ do not exceed $\pi/2$ for small $K$, whereas the oscillations
			of $\phi_n$ cover the entire range for $K$ sufficiently large. Intrinsic frequencies are sampled
			from the Lorentzian distribution \eqref{Lorentz} with $\epsilon = 0.1$, $\delta = 0.01$.
		} 
		\label{f.order}
	\end{figure}
	
	The paper is organized as follows. In the next section, we use the Ott-Antonsen Ansatz
	\cite{OttAnt08} to derive a system of two ordinary differential equations, which captures
	the long time dynamics of the coupled system. In Section~\ref{sec.qualitative}, we analyze
	the reduced system. The analysis uses the unfolding of the Bogdanov-Takens bifurcation
	among other qualitative techniques for ordinary differential
	equations. Further, we identify
	a heteroclinic bifurcation, which explains the
	phase transition in the collective dynamics. The heteroclinic bifurcation is analyzed in
	Section~\ref{sec.heteroclinic}. In Section~\ref{sec.collective}, we relate the
	analysis of the reduced system to the collective dynamics of \eqref{KM}. We conclude
	with a brief discussion of the main results in Section~\ref{sec.discuss}.
	
	\begin{figure}
		\begin{center}
			\textbf{a}\;     \includegraphics[width =.31\textwidth]{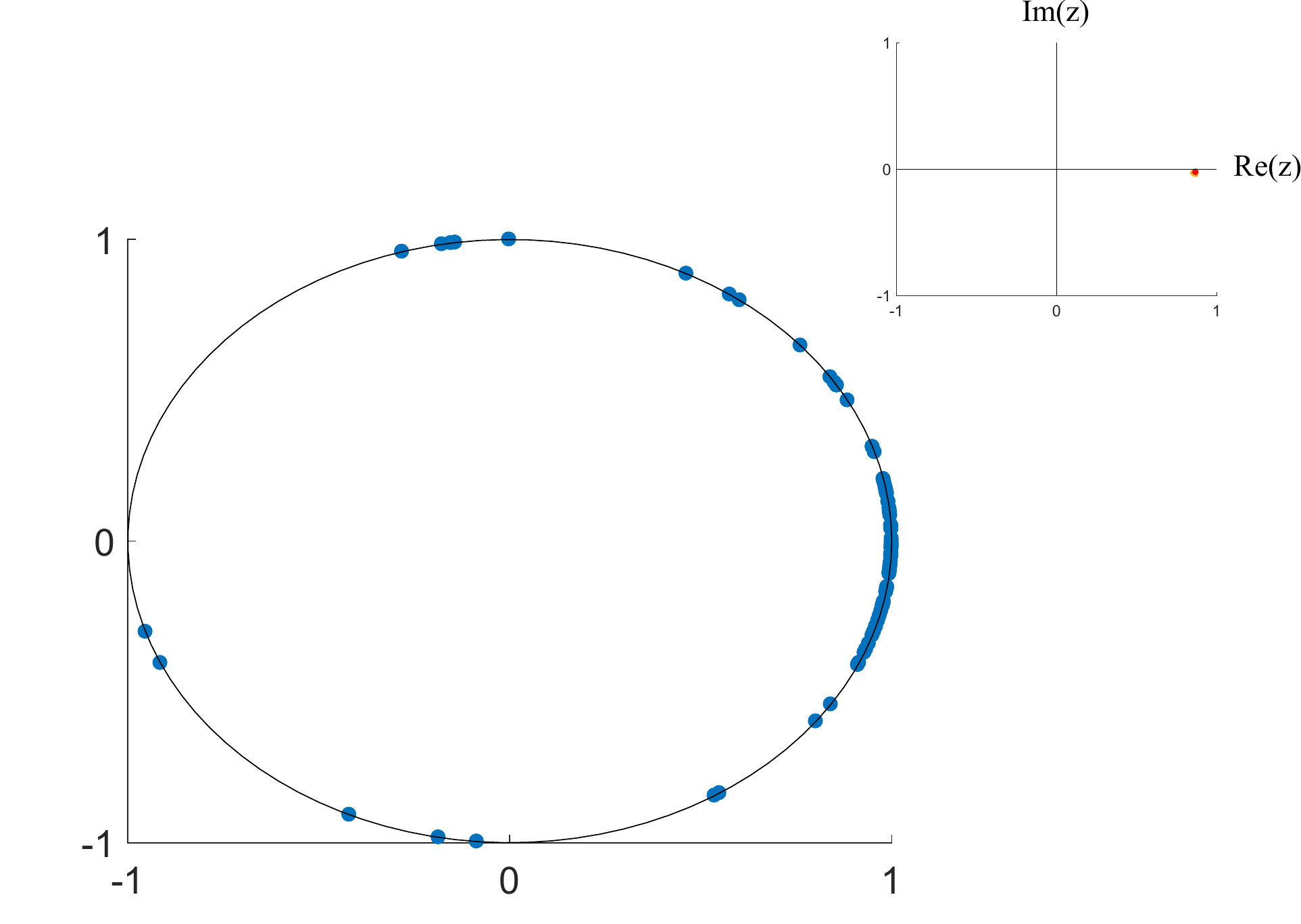}
			\textbf{b}\;     \includegraphics[width =.31\textwidth]{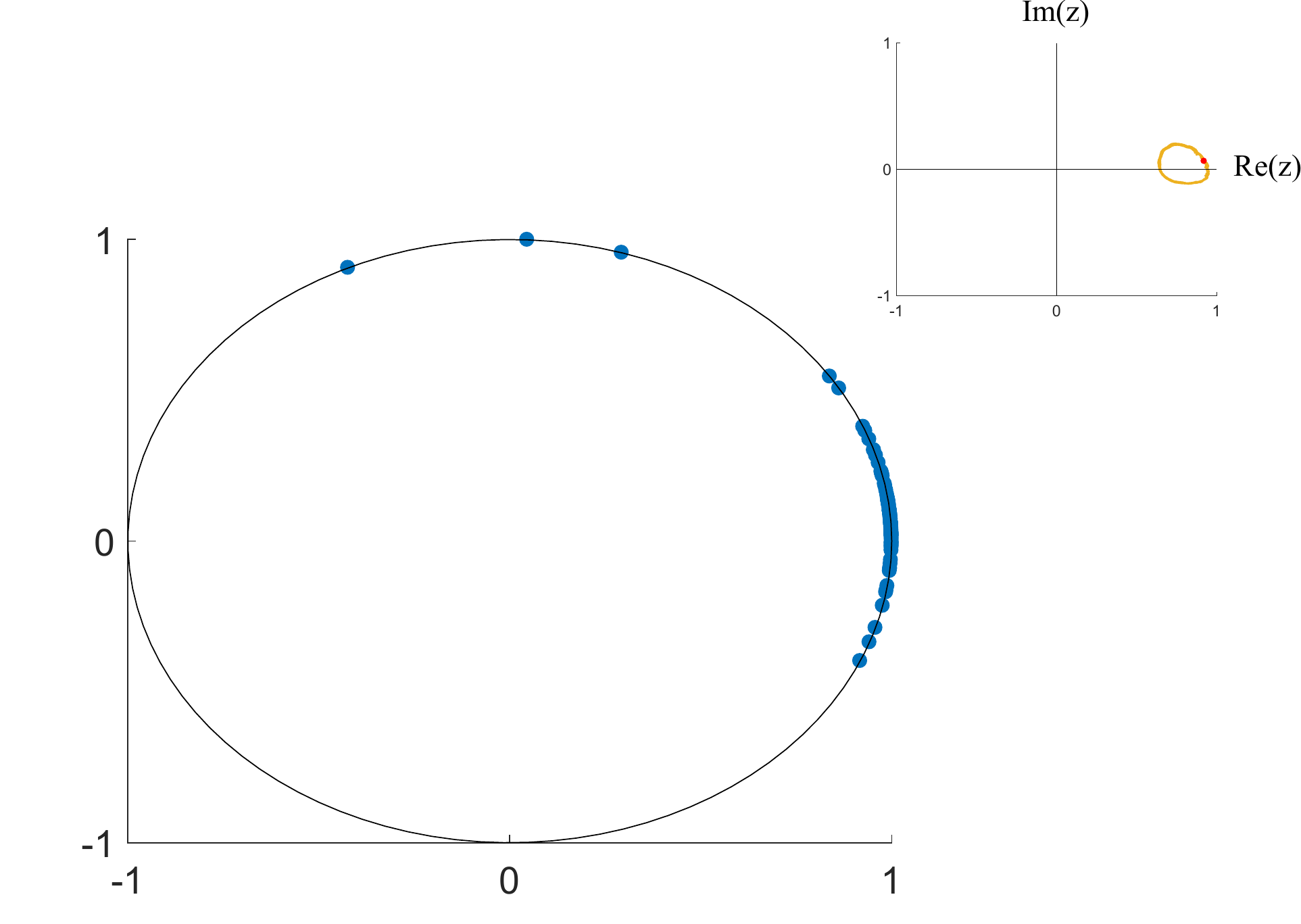}\\
			\textbf{c}\;	\includegraphics[width=.31\textwidth]{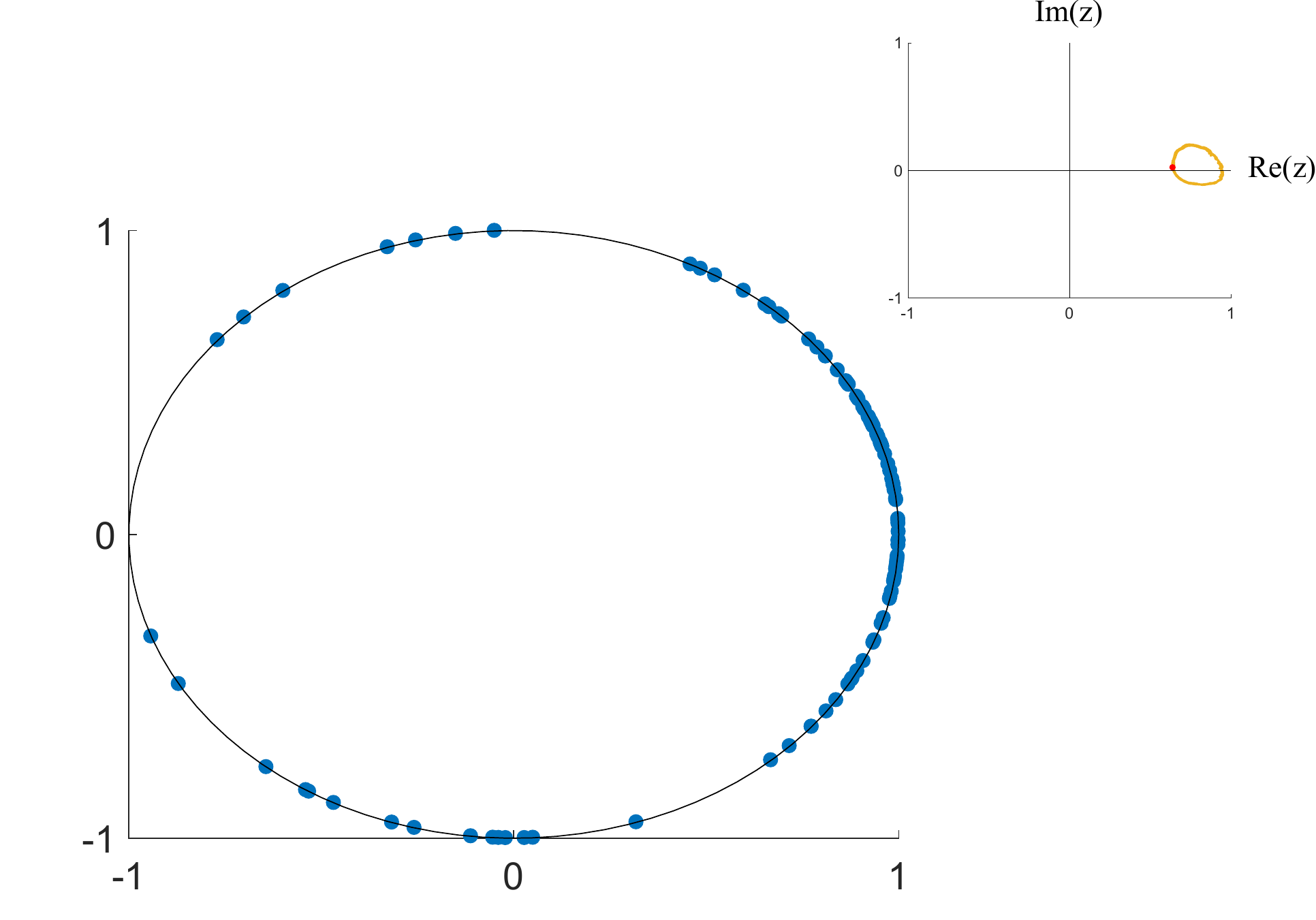}\\
			\textbf{d}\;      \includegraphics[width =.31\textwidth]{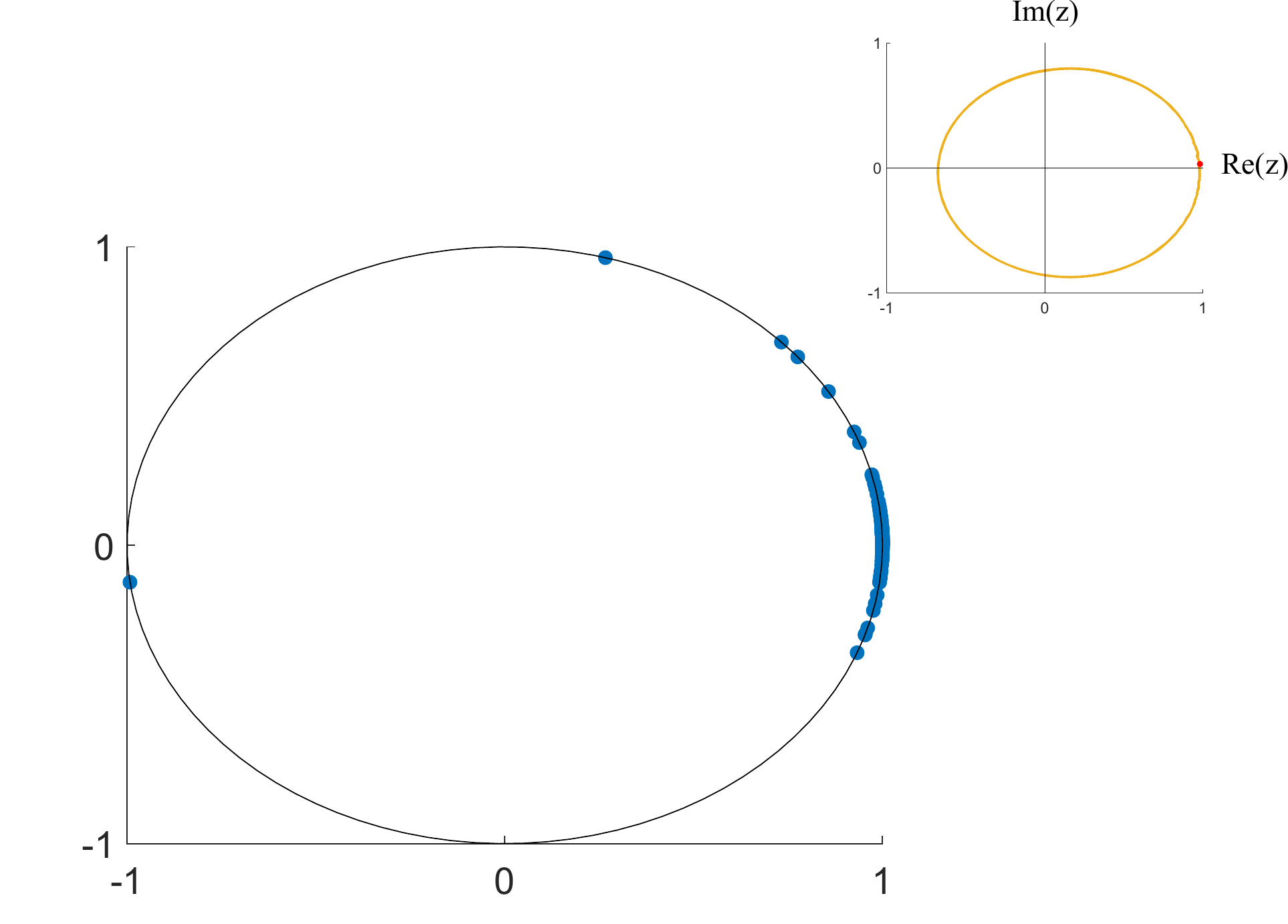}
			\textbf{e}\;	\includegraphics[width=.31\textwidth]{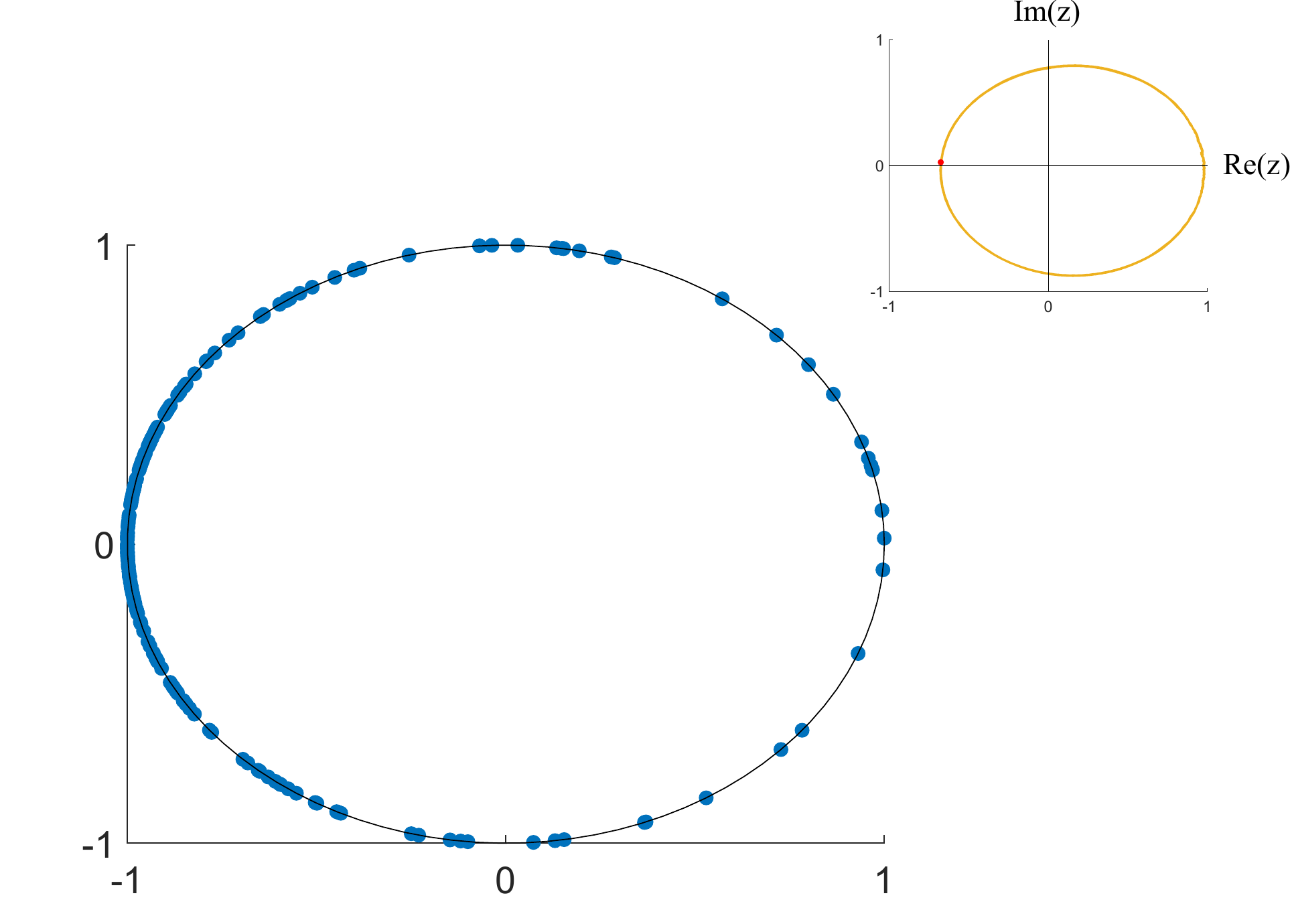}
		\end{center}
		\caption{Plots in each row illustrate three distinct regimes in collective
			dynamics for three values of $K$: ($K=0.05$ in \textbf{a}), ($K=0.17$ in \textbf{b},\textbf{c})
			and ($K=0.4$ in \textbf{d},\textbf{e}). The values of other parameters are
			$\delta = 0.01$, $\epsilon = 0.1$, and  $n=2000$ (though for ease of visualization we plot a random sampling of $200$ oscillators). Snapshot \textbf{a} shows the equilibrium distribution (though we emphasize that individual oscillators travel around the circle). The snapshots in \textbf{b}, \textbf{c}
			show the location of the oscillators in phase space at two moments of time during one period
			of oscillations. Note that the coherence varies from  high in \textbf{b} to low in \textbf{c}.
			The variation in coherence is also observed in the snapshots made for larger value of $K$ shown in \textbf{d},\textbf{e}.
			However, in the former case the oscillations of the center of mass of the population are restricted
			to the left half-plane, whereas in the latter case the center of mass travels around the circle.
			This can be seen from the trajectories of the complex order parameter shown in the insets.
			Note that the orbit traced by the order parameter in \textbf{d}, \textbf{e} encircles the origin.
			In contrast,  in \textbf{b}, \textbf{c} the argument of the order parameter remains between
			$(-\pi/2, \pi/2)$ for all times.
		}   
		\label{f.snapshots}
	\end{figure}
	
	\section{The mean field limit}\label{sec.mean-field}
	\setcounter{equation}{0}
	In the large $n$ limit, the dynamics of \eqref{KM} is described by the following Vlasov equation
	(cf.~\cite{Dob79, StrMir91})
	\begin{equation}\lbl{MF}
		\p_t f(t,\theta,\omega) +
		\p_\theta \left\{ V(t,\theta,\omega) f(t,\theta,\omega) \right\}=0,
	\end{equation}
	where  $f(t,\theta,\omega)d\theta d\omega$ is the  probability of
	$(\theta,\omega)\in (\theta, \theta+d\theta)\times (\omega, \omega+d\omega)$
	at time $t\in \R^+$.
	The velocity field is defined by
	\be\lbl{def-V}
	V(t,\theta,\omega)=1+\omega-\frac{e^{\1\theta}+e^{-\1\theta}}{2}
	+ {K\over 2\1}  \left( h(t) e^{-\1\theta} -\overline{h(t) e^{-\1\theta}}\right),
	\ee
	where 
	\be\lbl{order}
	h(t):= h[f(t,\cdot)]=\int_{\T\times\R}  e^{\1\theta}  f(t,\theta,\omega)  d\theta d\omega,
	\ee
	is the continuum order parameter.
	
	From \eqref{KM} with $K=0$ one can determine the stationary distribution of the
	oscillators in the phase space:
	$$
	f_{st}(\theta, \omega)=\frac{j(\omega)}{1+\omega-\cos\theta},
	$$
	where $j(\omega)$ is subject to the following equation
	$$
	j(\omega) \int_{\T}\frac{d\theta} {1+\omega-\cos\theta} = g(\omega).
	$$
	
	To investigate \eqref{MF} for $K\ge 0$, we invoke the Ott-Antonsen Ansatz
	\cite{OttAnt08}
	\begin{equation}\lbl{OA}
		f(t,\theta,\omega)=\frac{g(\omega)}{2\pi} \left\{ 1+
		\sum_{k=1}^\infty
		\left[ \alpha(t,\omega)^k e^{\1k\theta}+ \overline{\alpha(t,\omega)}^k e^{-\1k\theta} \right] \right\}.
	\end{equation}
	Previous studies of the KM suggest that the Ott-Antonsen Ansatz correctly
	describes the long time asymptotic behavior of solutions of this model. It has been used
	to study chimera states \cite{Ome18} among many other spatiotemporal patterns
	in the KM and related models \cite{OttAnt09, Laing09, KlinFran2015, TesSciCol2007, OmeLai2022}.
	There is a convincing albeit not completely rigorous mathematical argument justifying
	the use of this Ansatz for studying the long time behavior of solutions of the Kuramoto
	model. The analysis in the remainder of this paper relies on the validity of the Ott-Antonsen
	Ansatz.

	By plugging \eqref{OA} into \eqref{MF}, we obtain
	\begin{equation}\label{prealpha}
		\p_t \alpha =\frac{K}{2}\left(\bar h-h\alpha^2\right)+\1\left(\frac{1}{2}-\alpha\omega +\frac{\alpha^2}{2}\right).
	\end{equation}
	Multiplying both sides by $2$ and rescaling time in \eqref{prealpha} yields
	\begin{equation}\label{alpha-pde}
		\p_t \alpha =K\left(\bar h-h\alpha^2\right)+\1\left(1-2\alpha\omega +\alpha^2\right).
	\end{equation}
	Equation \eqref{alpha-pde} is simpler than the Vlasov equation \eqref{MF}, but it is still an integro-partial
	differential equation. For the Lorentzian density $g(\omega)$ given in \eqref{Lorentz}, one
	can further reduce \eqref{alpha-pde}
	to the following ordinary differential equation (cf.~\cite{OttAnt08})
	\begin{equation}\label{alpha}
		\alpha^\prime = -2\alpha \delta +K\left( \alpha-\alpha|\alpha|^2\right) +
		\1\left( 1-2\alpha\left(1+\epsilon^2\right) +\alpha^2\right),
	\end{equation}
	where by abuse of notation we continue to denote $\alpha(t, \epsilon^2-\1\delta)$ by $\alpha(t)$. Furthermore,
	the order parameter $h$ can now be expressed through $\alpha$:
	\begin{equation}\label{new-h}
		h(t)=\overline{\alpha(t)}.
	\end{equation}

	Equation \eqref{alpha} can be written as an ordinary differential
	equation on $\R^2$:
	\begin{equation}\label{cart}
		\begin{pmatrix}
			x^\prime\\ y^\prime
		\end{pmatrix}
		= \begin{pmatrix} 0 \\ 1 \end{pmatrix}
		+  \begin{pmatrix} K-2\delta & 2(1+\epsilon^2)\\ -2(1+\epsilon) &
			K-2\delta \end{pmatrix}  \begin{pmatrix} x \\ y\end{pmatrix}
		+(x^2+y^2)  \begin{pmatrix} x \\ -1-y\end{pmatrix},
	\end{equation}
	where $\alpha=x+\1 y$.
	Since we are interested in the macroscopic dynamics of \eqref{KM},
	we want to track the changes the qualitative behavior of the modulus and the argument
	of the order parameter. To this end, we rewrite \eqref{alpha} in polar coordinates
	$\alpha=\rho e^{\1\phi}$:
	\begin{equation}\label{2odes}
		\begin{split}
			\rho^\prime &= \left(1-\rho^2\right) \left(\sin\phi +K\rho\right) -2\delta \rho,\\
			\phi^\prime &= \rho^{-1}\left(1+\rho^2\right) \cos\phi -2\left(1+\epsilon^2\right).
		\end{split}
	\end{equation}
	The polar coordinate transformation blows up the origin in the $x-y$ plane to a $\rho=0$ circle.
	This singular change preserves a
	$1-1$ correspondence between the trajectories of \eqref{cart} and \eqref{2odes} lying in
	$\R^2/\{0\}$ and $R^+\times\T$
	respectively (cf.~\cite[\S 2.8]{ArrowPlace}). The two systems are topologically equivalent in
	these domains. 
	The right-hand side of \eqref{2odes} has a singularity at $\rho=0$. To resolve this
	singularity, we multiply both sides of the equations in \eqref{2odes} by $\rho$ and rescale time
	to obtain
	\begin{equation}\label{system}
		\begin{split}
			\dot \rho &= \rho\left(1-\rho^2\right) \left(\sin\phi +K\rho\right) -2\delta \rho^2,\\
			\dot \phi &= \left(1+\rho^2\right) \cos\phi -2\left(1+\epsilon^2\right)\rho.
		\end{split}
	\end{equation}
	Systems \eqref{2odes} and \eqref{system} are topologically equivalent on $\R^+\times \T$.
	However, the latter defines a smooth vector field over $\R\times \T$. It can be studied
	by standard methods of the qualitative theory for differential equations. In particular,
	this allows us to resolve the singularity of \eqref{2odes} at $\rho=0$, which is important
	for understanding the macroscopic dynamics of the coupled system.

	\section{Qualitative analysis of the planar system}
	\label{sec.qualitative}
	\setcounter{equation}{0}
	In this section, we analyze \eqref{system} when $K,\delta,\epsilon$ are nonnegative and small.
	
	There are two fixed points of \eqref{system}:
	\begin{equation}\label{fixed}
		\left(0,\frac{\pi}{2}\right), \left(0,\frac{-\pi}{2}\right).
	\end{equation}
	Additionally, when $\delta = \epsilon = 0$ we can explicitly calculate a third fixed point at $(1,0)$.
	We begin with the local analysis around $\left(1, 0\right)$. This is followed by the linearization about the other two fixed points. After that we combine this information and identify a nonlocal
	bifurcation of  the vector field responsible for the transformation of the collective dynamics in \eqref{KM}.
	
	\subsection{The Bogdanov-Takens singularity}
	Let $\rho=1-r$ and rewrite \eqref{system} in the neighborhood of $(1, 0)$ keeping terms up to
	and including order $2$ to obtain
	\begin{equation}\label{BT-delta-epsilon}
		\begin{split}
			\dot r &= -2r\phi -2r K +2\delta + O(3),\\
			\dot \phi &= -\phi^2 + r^2 -2\epsilon^2 +O(3).
		\end{split}
	\end{equation}
	With $\epsilon=\delta=K=0$ \eqref{BT-delta-epsilon} fits into the normal form of the Bogdanov-Takens
	bifurcation \cite{Arnold72, Bogdanov75, Takens74}:
	\begin{equation}\label{BT}
		\begin{split}
			\dot r &= -2r\phi,\\
			\dot \phi &= -\phi^2 + r^2.
		\end{split}
	\end{equation}
	Rewriting \eqref{BT} in polar coordinates
	$r=a\cos\psi, \; \phi=a \sin\psi$, we obtain
	\begin{equation}\label{BT-polar}
		\begin{split}
			\dot a  &=-a^2\sin\psi,\\
			\dot\psi &= a\cos\psi.
		\end{split}
	\end{equation}
	From   \eqref{BT-polar} one can plot the phase portrait of \eqref{BT} (see Figure~\ref{f.BT}\textbf{a}).
	\begin{figure}
		\centering
		\textbf{a}\,	\includegraphics[width=.4\textwidth]{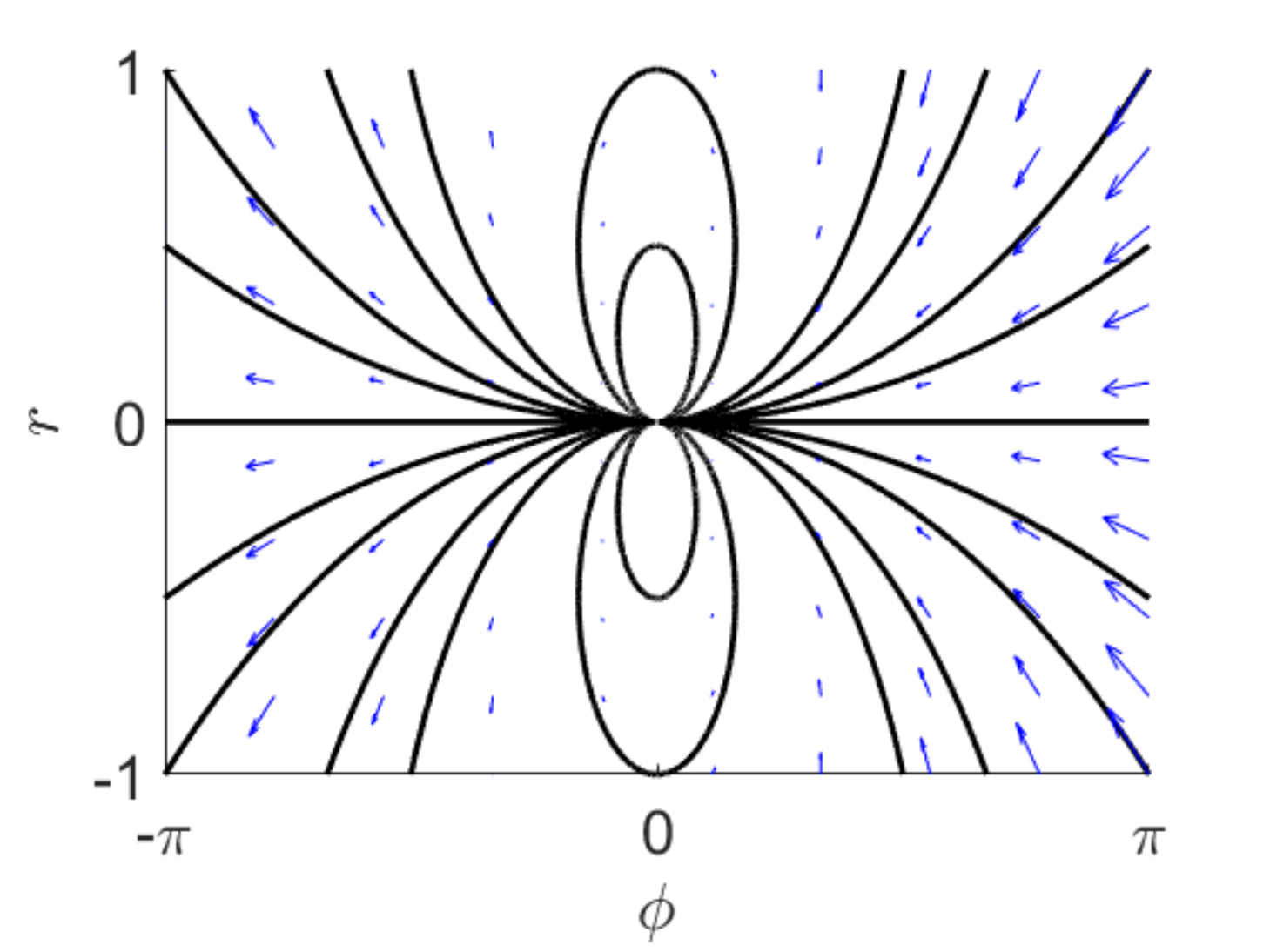}\quad
		\textbf{b}\,	\includegraphics[width=.4\textwidth]{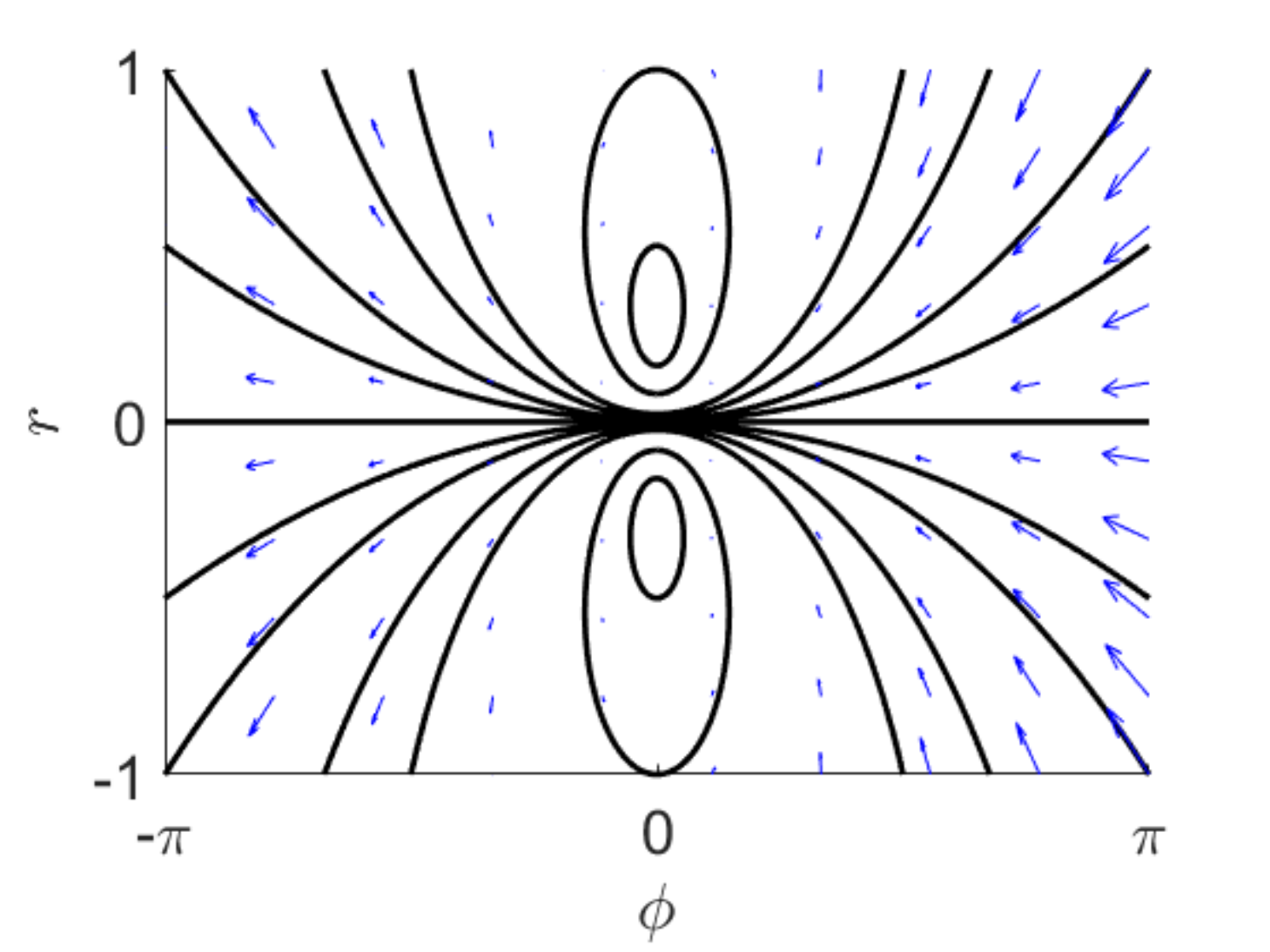}\\
		\textbf{c}\,      \includegraphics[width=.4\textwidth]{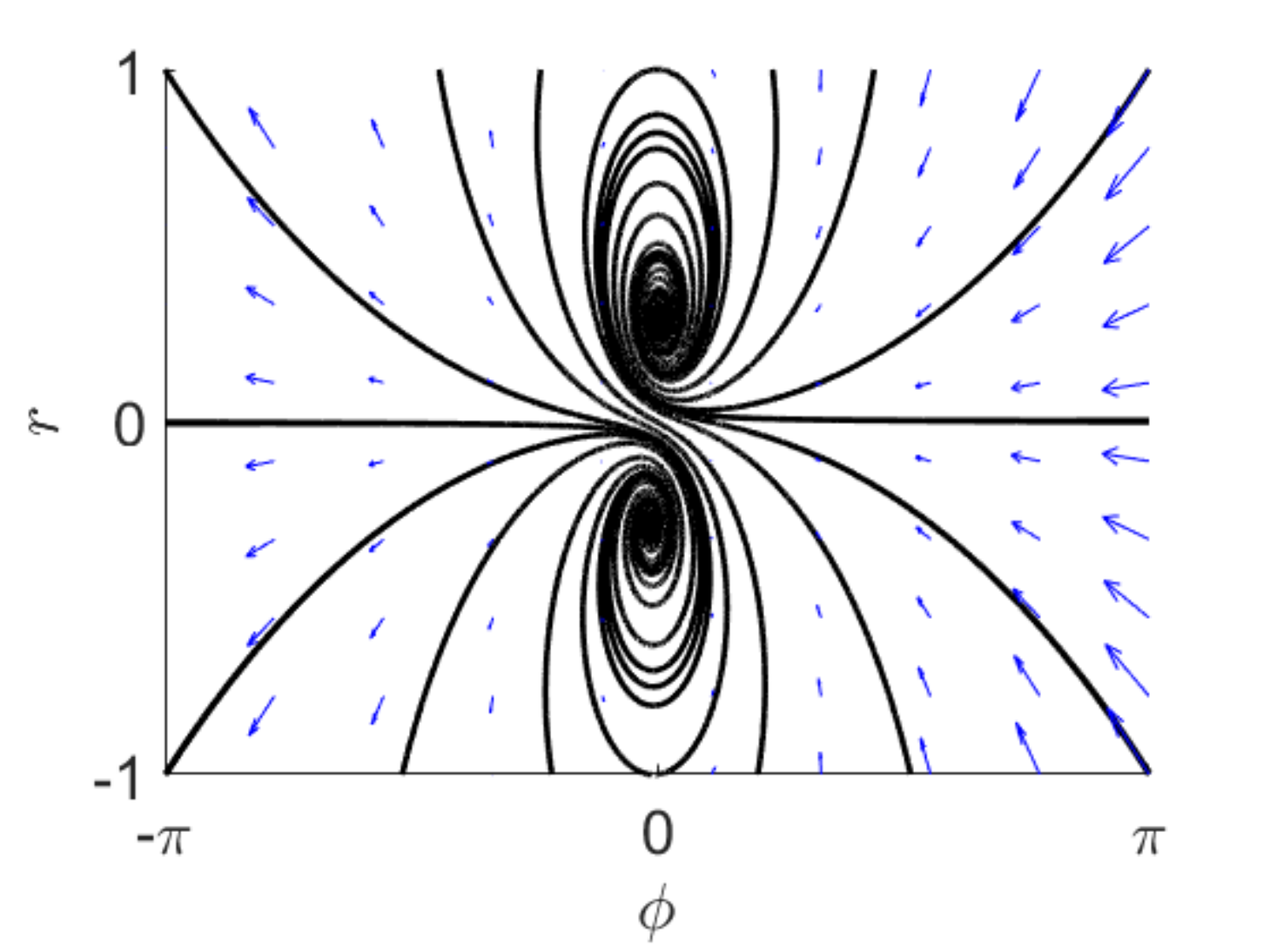}\quad
		\textbf{d}\,	\includegraphics[width=.4\textwidth]{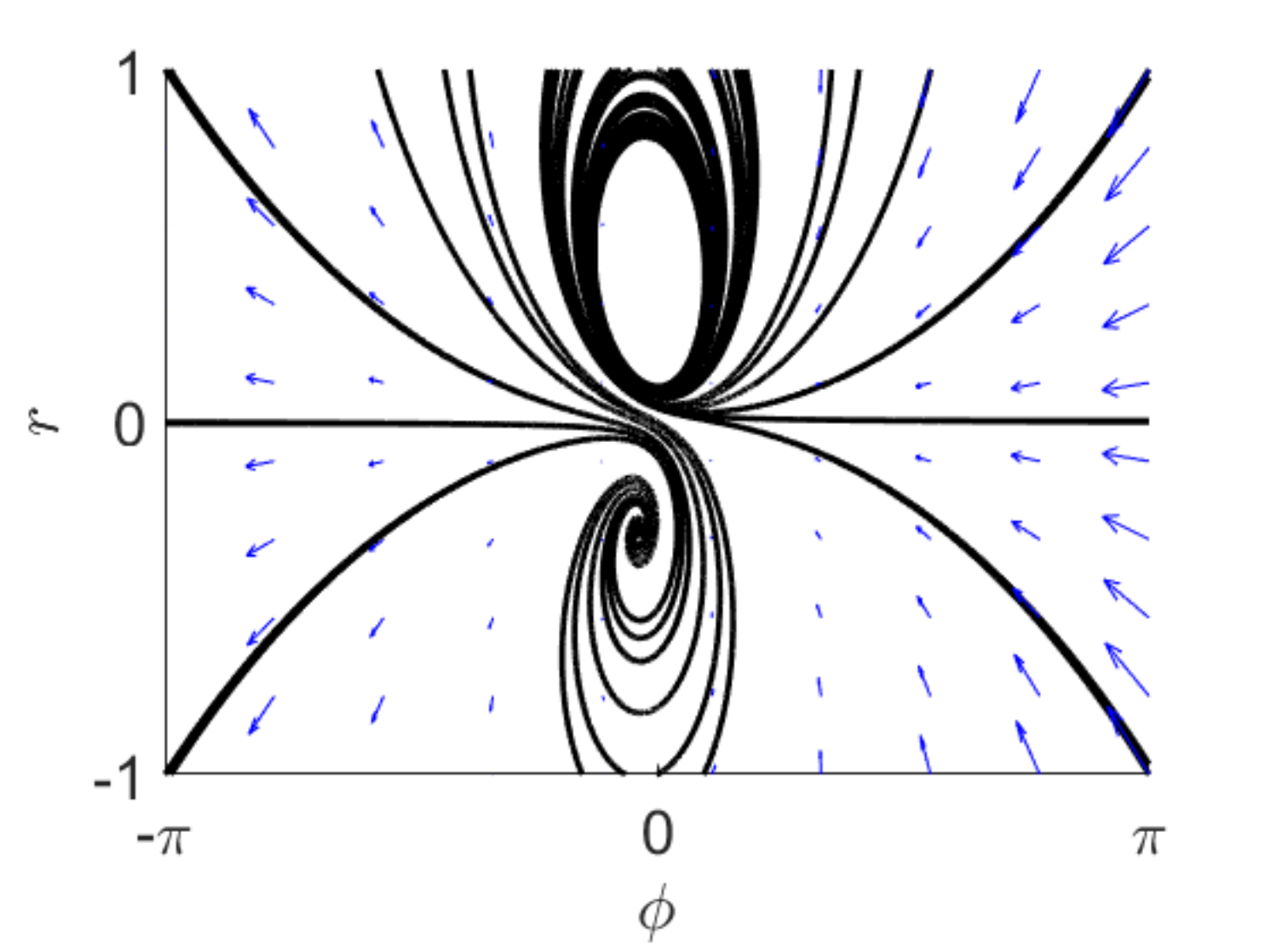}
		\caption{Phase portraits for \eqref{BT-delta-epsilon}.
			{\bf a)} For $K=\delta = \epsilon =0$, the system is at the Bogdanov-Takens bifurcation.
			{\bf b)} Upon increasing $\epsilon$, it becomes an integrable system with two centers
			(see \eqref{integral}; $K=\delta = 0$, $\epsilon=0.2$).
			{\bf c)} Further, increasing  $K$ and $\delta$ transforms the two centers into foci
			($K=\delta = 0.01$, $\epsilon = 0.2$).
			{\bf d)} For larger $K$, the system undergoes an Andronov-Hopf bifurcation, which  gives
			birth to a stable limit cycle ($K= 0.09$, $\delta = 0.01$, $\epsilon = 0.2$).}
		\label{f.BT}
	\end{figure}

	Next, we keep $\delta=K=0$ and let $0<\epsilon\ll 1$. By increasing $\epsilon$ from $0$,
	we observe that the
	fixed point at the origin
	bifurcates into two fixed points: $(0, \sqrt{2}\epsilon)$ and $(0,-\sqrt{2}\epsilon)$.
	Furthermore, the system has
	an integral (cf.~\cite{GuckHolmes})
	\begin{equation}\label{integral}
		J(r,\phi)=\frac{r^2+\phi^2 +\epsilon^2}{r}.
	\end{equation}
	Thus, the trajectories are given by two families of circles:
	$$
	(r-c)^2+\phi^2=c^2-\epsilon^2
	$$
	for $c\ge \epsilon$ and $c\le -\epsilon$. As $c^2\to\infty$ they limit onto $r$-axis
	(Fig.~\ref{f.BT}\textbf{b}).
	
	\begin{figure}
		\centering
		\textbf{a}\,	\includegraphics[width=.3\textwidth]{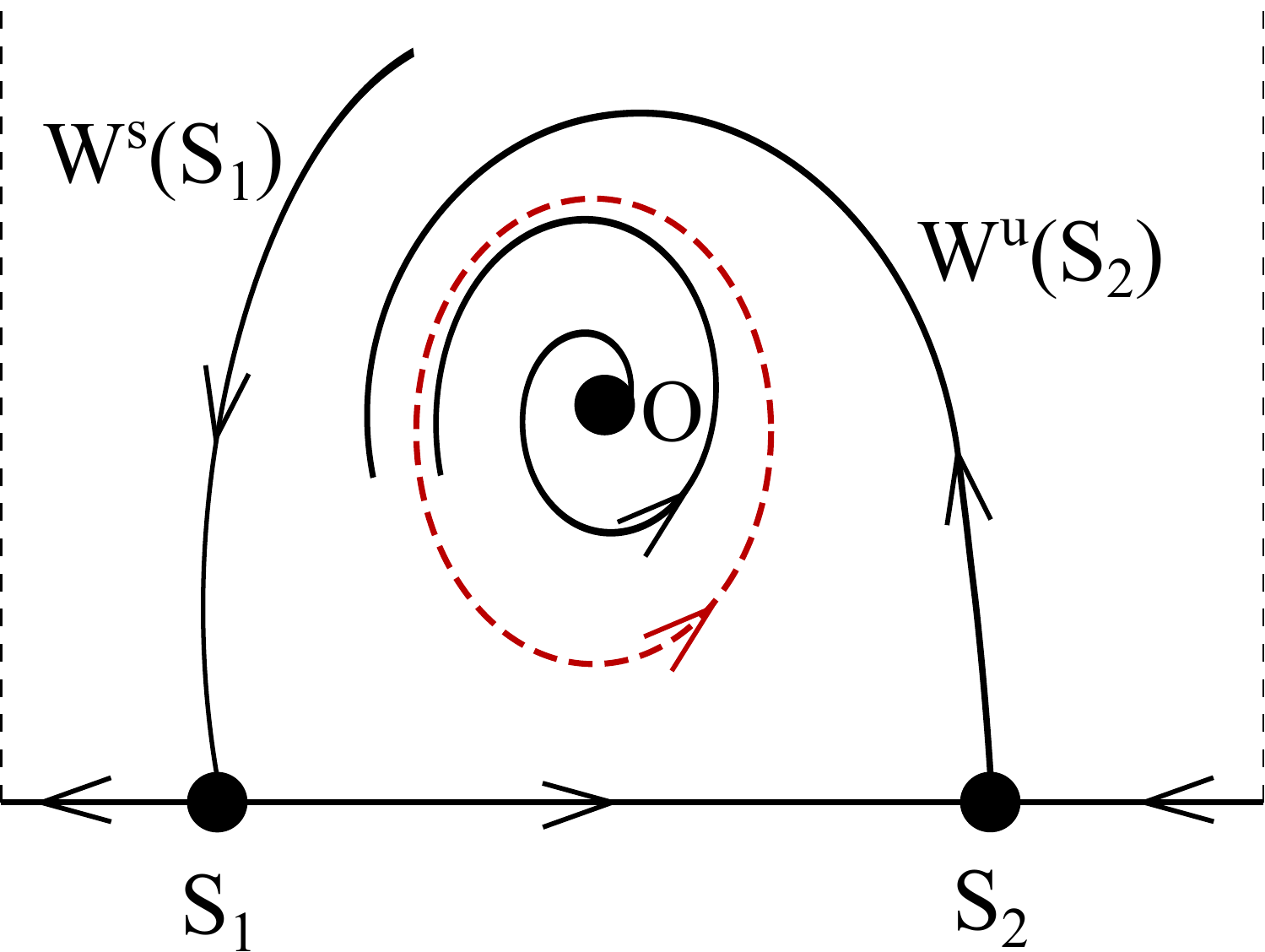}
		\textbf{b}\,	\includegraphics[width=.3\textwidth]{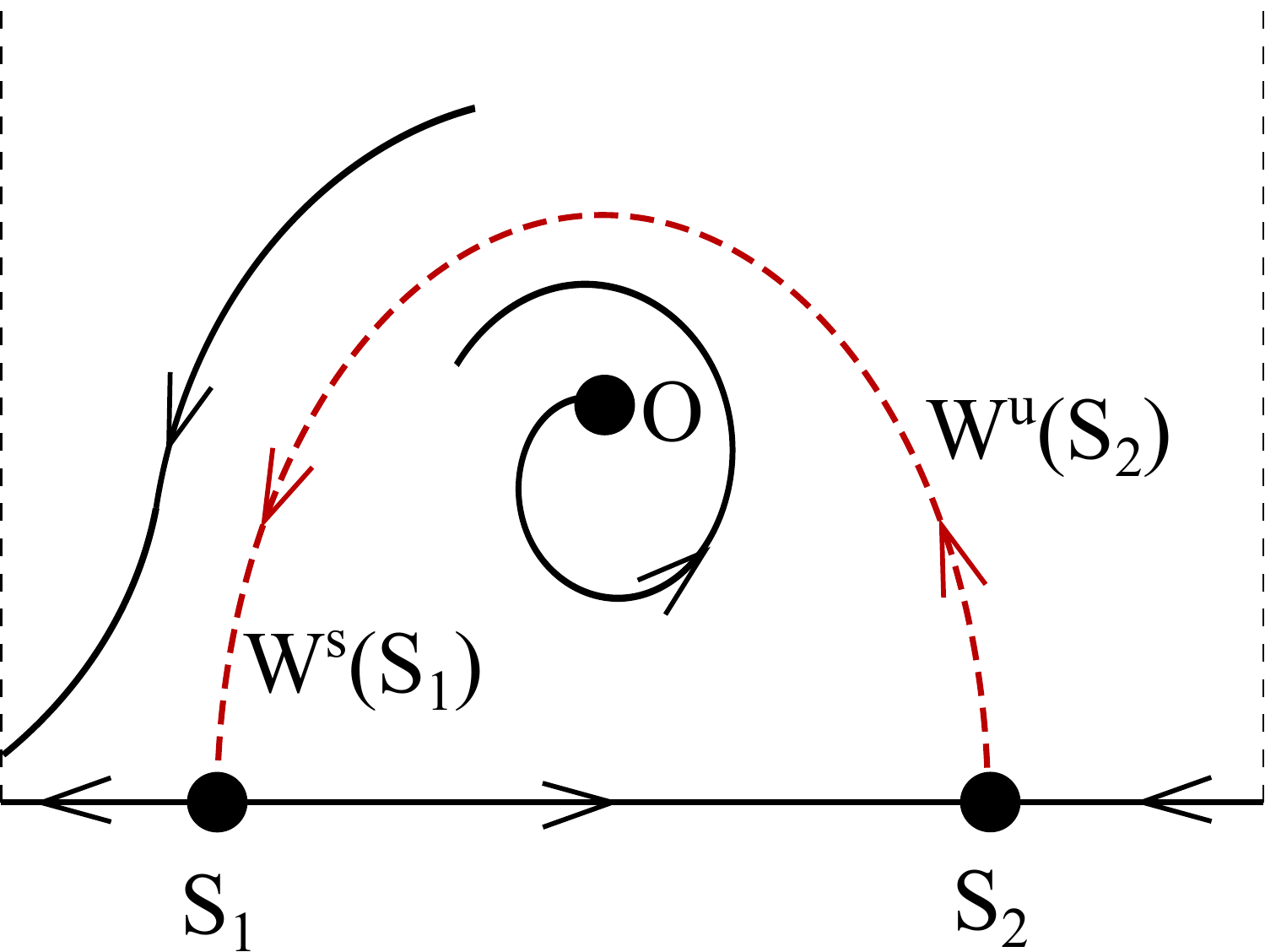}
		\textbf{c}\,      \includegraphics[width=.3\textwidth]{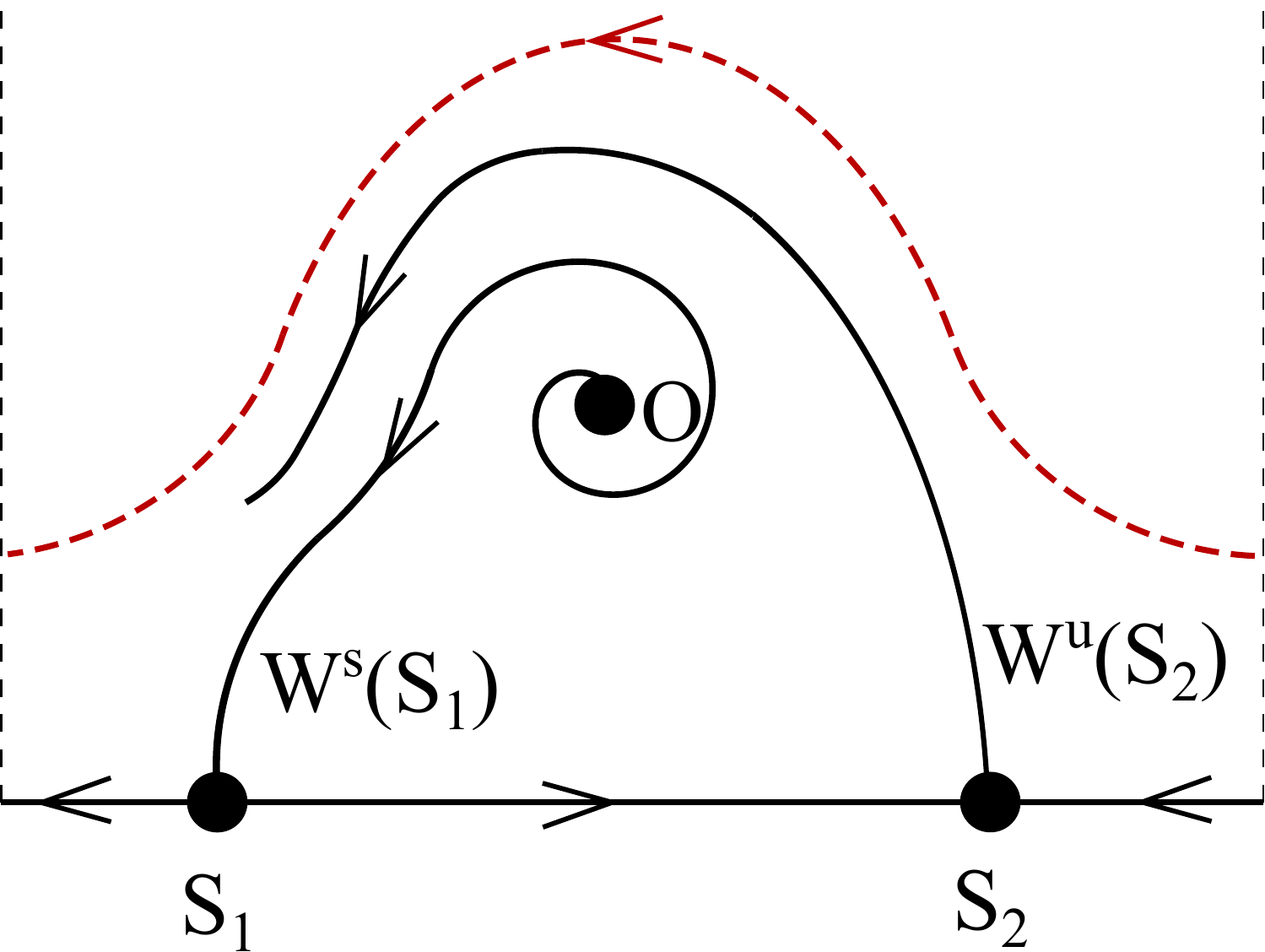}
		\caption{Schematic phase portraits just before, at, and right after
			the  heteroclinic bifurcation.  The periodic orbits in \textbf{a} and \textbf{c}
			(plotted in red) bifurcate from the heteroclinic loops shown  in \textbf{b}.
			Here and below, the phase portraits are identified along lateral sides (see
			dashed lines).
		}
		\label{f.heteroclinic}
	\end{figure}

	Next suppose $\epsilon, \delta,$ and $K$ are positive and small. To study this case we use
	the following scaling:
	\begin{equation}\label{scaling}
		\delta =\nu\epsilon, K=\mu\epsilon, r:=\epsilon\tilde r, \phi=\epsilon\tilde \phi.
	\end{equation}
	By plugging \eqref{scaling} into \eqref{BT-delta-epsilon}, we have
	\begin{equation}\label{r-phi}
		\begin{split}
			\epsilon^{-1} \dot{\tilde r} &= -2\tilde r \tilde\phi -2\mu \tilde r +2 \nu,\\
			\epsilon^{-1} \dot{\tilde\phi} &= -\tilde{\phi}^2 + \tilde{r}^2 -1.
		\end{split}
	\end{equation}
	Treating $\nu$ and $\mu$ as positive and small parameters of the same order, we locate the 
	fixed point of \eqref{r-phi} $(\bar r, \bar\phi)$ with positive $\bar r$:
	$$
	\bar \phi\approx \nu-\mu, \bar r\approx 1.
	$$
	Linearization about $(\bar r, \bar\phi)$ yields:
	$$
	\epsilon^{-1}\dot\xi=
	A \xi,
	\quad
	A= -2 \begin{pmatrix} \bar\phi+\mu & \bar r \\  -\bar r & \bar\phi \end{pmatrix}.
	$$
	Note 
	$$
	\operatorname{Tr} A =-4\nu+2\mu, \quad \det A=4\left( \bar\phi^2 +\bar r^2 +\bar\phi \mu\right).
	$$
	The determinant of $A$ is positive for sufficiently small $\mu\ge 0$ (recall that $\bar r=O(1)$).
	For $0\le \mu<2\nu$ the fixed point $(\bar r,\bar\phi)$ is a stable focus. It becomes unstable
	for $\mu>2\nu$. At $\mu=2\nu$ the system undergoes a supercritical Andronov-Hopf bifurcation
	(see Figure~\ref{f.BT}\textbf{d}).
	

	\subsection{The heteroclinic bifurcation}
	We now turn to the remaining two fixed points: $S_1=(0,-\pi/2)$ and
	$S_2=(0,\pi/2)$. Linearization about $S_1$ yields
	$$
	\begin{pmatrix}\dot\xi \\ \dot \eta\end{pmatrix}=
	\begin{pmatrix} -1 & 0 \\ -2(1+\epsilon^2) & 1\end{pmatrix} \begin{pmatrix} \xi \\ \eta\end{pmatrix}
	+
	\begin{pmatrix} (K-2\delta)\xi^2 +O(3) \\  O(3) \end{pmatrix}.
	$$
	The eigenvalues are $\lambda_1=1$ and $\lambda_2=-1$
	with the corresponding eigenvectors $v_1=(1,0)$ and $v_2=(1, 1+\epsilon^2)$.
	
	Similarly,  linearization about $S_2$ yields
	$$
	\begin{pmatrix}\dot\xi \\ \dot \eta\end{pmatrix}=
	\begin{pmatrix} 1 & 0 \\ -2(1+\epsilon^2) & -1\end{pmatrix} \begin{pmatrix} \xi \\ \eta\end{pmatrix}
	+
	\begin{pmatrix} (K-2\delta)\xi^2 +O(3) \\  O(3) \end{pmatrix}.
	$$
	The eigenvalues are $\lambda_1=1$ and $\lambda_2=-1$
	with the corresponding eigenvectors $v_1= (1, -1-\epsilon^2) $ and $v_2=(1, 0)$.
	Consider $D=[-1, 1]\times \T$. In the parameter regime of interest, $D$ is positively invariant.
	Denote the branches of stable and unstable manifolds of $S_1$ and $S_2$ lying in $D$ by
	$W^s(S_{1,2})$ and $W^u(S_{1,2})$ respectively.

	In the remainder of this section, we assume that $0<\epsilon\ll 1$ is fixed.
	Both $K$ and $\delta $ are also nonnegative and $O(\epsilon)$. Further, we fix $\delta$ and treat
	$K$ as a control parameter.
	The linearization about $S_1$ and $S_2$ shows that both are saddles. The unstable manifold
	of $S_1$ and the stable manifold of $S_2$ lie in $\{\rho=0\}$ (Fig.~\ref{f.heteroclinic}).
	The unstable manifold
	of $S_2$ and the stable manifold of $S_1$ are tangent to $(1, -1-\epsilon^2)$ and
	$(1, 1+\epsilon^2)$ respectively.
	Recall that there is another fixed point in $D$:
	$O\approx (1-\epsilon, \delta-K)$. For $K>2\delta$, $O$ is an unstable focus. Denote the stable limit
	cycle born at the Andronov-Hopf bifurcation by $\mathcal{P}^-_K$. Note
	that $W^u(O)$ and $W^u(S_2)$ limit onto $\mathcal{P}^-_K$   (Fig.~\ref{f.heteroclinic}\textbf{a}).

	As $K$ is increased from the Andronov-Hopf bifurcation, $K_{AH}$, the following transformations of the phase portrait take place.
	The periodic orbit $\mathcal{P}^-_K$  grows in size. $W^u(S_2)$ moves up while $W^s(S_1)$   moves
	down (Figure~\ref{f.heteroclinic}\textbf{a}). At $K=K_{HC}$ they intersect forming a heteroclinic orbit
	$\Gamma^0$
	connecting $S_2$ and $S_1$ (Figure~\ref{f.heteroclinic}\textbf{b}).
	After the heteroclinic bifurcation, the limit cycle born at the Andronov-Hopf bifurcation disappears
	blending into a heteroclinc loop.
	A new limit cycle, $\mathcal{P}^+_K$, is born at $K=K_{HC}+0$ (Figure~\ref{f.heteroclinic}\textbf{c}).
	In contrast to  $\mathcal{P}^-_K$,  $\mathcal{P}^+_K$ is noncontractible.
	Thus, the heteroclinic bifurcation produces a limit cycle on each side of the bifurcation, i.e., both
	at $K=K_{HC}-0$ and $K=K_{HC}+0$. The bifurcation diagram in
	Figure~\ref{f.bif-diagram} summarizes this information.
	\begin{figure}
		\centering
		\includegraphics[width=.45\textwidth]{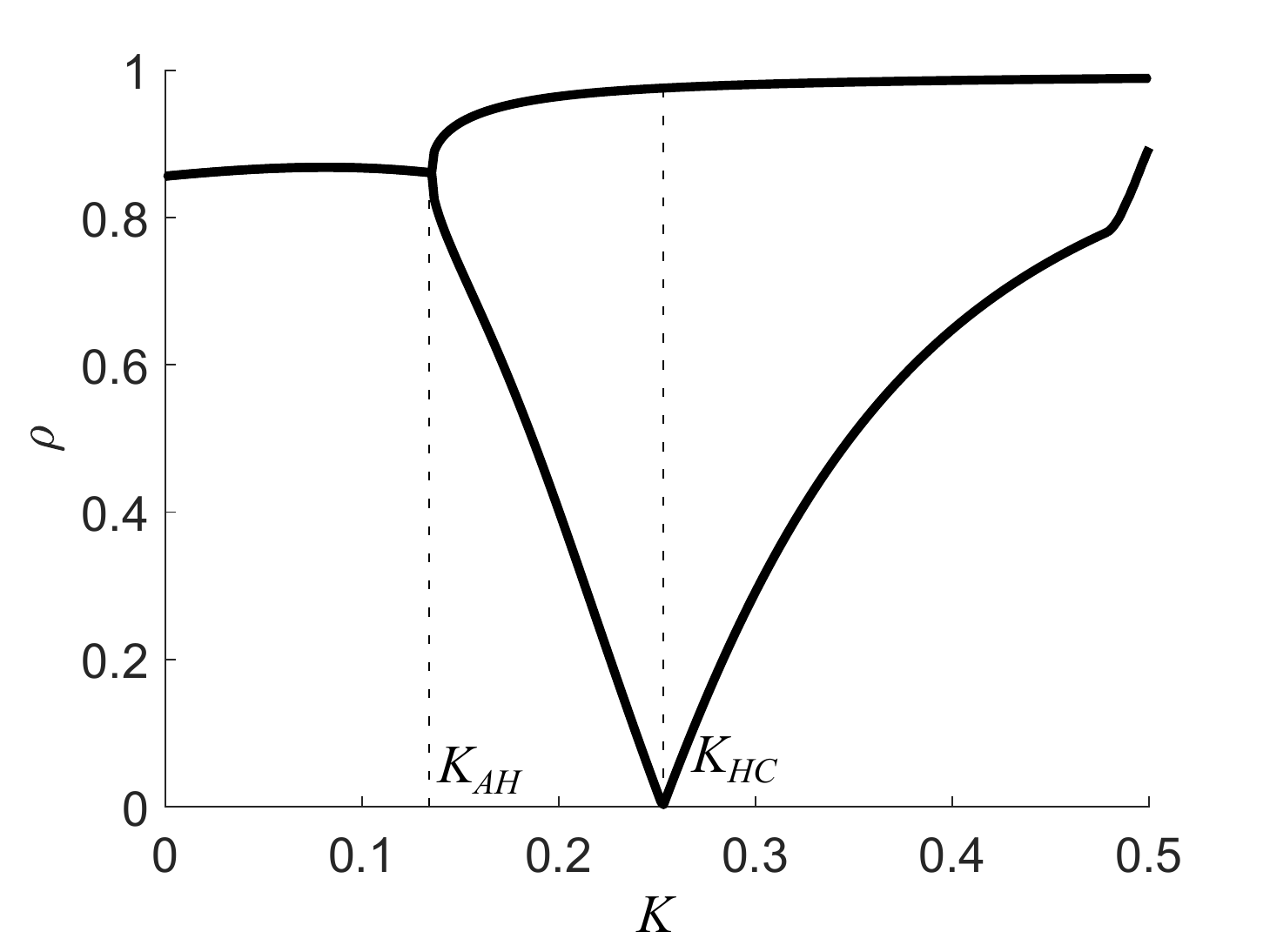}
		\caption{The bifurcation diagram for \eqref{BT} with
			$0<\delta,\epsilon\ll 1$ (here, $\epsilon = 0.1, \delta = 0.01$). For families of periodic orbit, the
			largest and the smallest values of $\rho$ are indicated.
		}
		\label{f.bif-diagram}
	\end{figure}

	To get a first insight into the heteroclinic bifurcation, we numerically computed
	a Poincar\'e map for \eqref{system}. Specifically, for values of $K$ near $K_{HC}$, we set up
	a cross section, $\Sigma$, intersecting
	the heteroclinic orbit (see a dashed line in plots \textbf{a} and \textbf{b} in Figure~\ref{f.Pmap}).
	Then we sampled initial conditions from the appropriate region of $\Sigma$ (around the point
	of intersection with $\Gamma^0$) and followed the corresponding trajectories until their first
	return to the chosen region of $\Sigma$ (see trajectories plotted in red
	of Figure~\ref{f.Pmap}\textbf{a},\textbf{b}). Note that after the bifurcation the red trajectory
	makes a full revolution around the cylinder before returning back to $\Sigma$
	(see Figure~\ref{f.Pmap}\textbf{b}). The Poincar\'e maps computed before and after
	the bifurcation are practically identical (see Figure~\ref{f.Pmap}\textbf{c},\textbf{d}).
	Furthermore, the map shows a strong contraction of the vector field near the heteroclinic loops.
	The periodic orbits corresponding to the fixed points of the map shown in
	Figure~\ref{f.Pmap}\textbf{c} and \textbf{d} are shown in
	Figure~\ref{f.Pmap}\textbf{e} and \textbf{f} respectively. The two limit cycles approach
	heteroclinic loops $\overline{\Gamma^0\bigcup\Gamma^-}$ and
	$\overline{\Gamma^0\bigcup\Gamma^+}$ 
	in Hausdorff distance as $K\to K_{HC}-0$ and $K\to K_{HC}+0$ respectively (see Figure~\ref{f.boa}).
	Both limit cycles are stable. Furthermore, numerically computed Poincar\'e maps do not show
	any effect of the heteroclinic bifurcation. These counterintuitive observations will
	be explained in the next section by the analysis of the nonlocal bifurcation taking place
	in \eqref{system}.

	\begin{figure}
		\textbf{a}\;		\includegraphics[width = .45\textwidth]{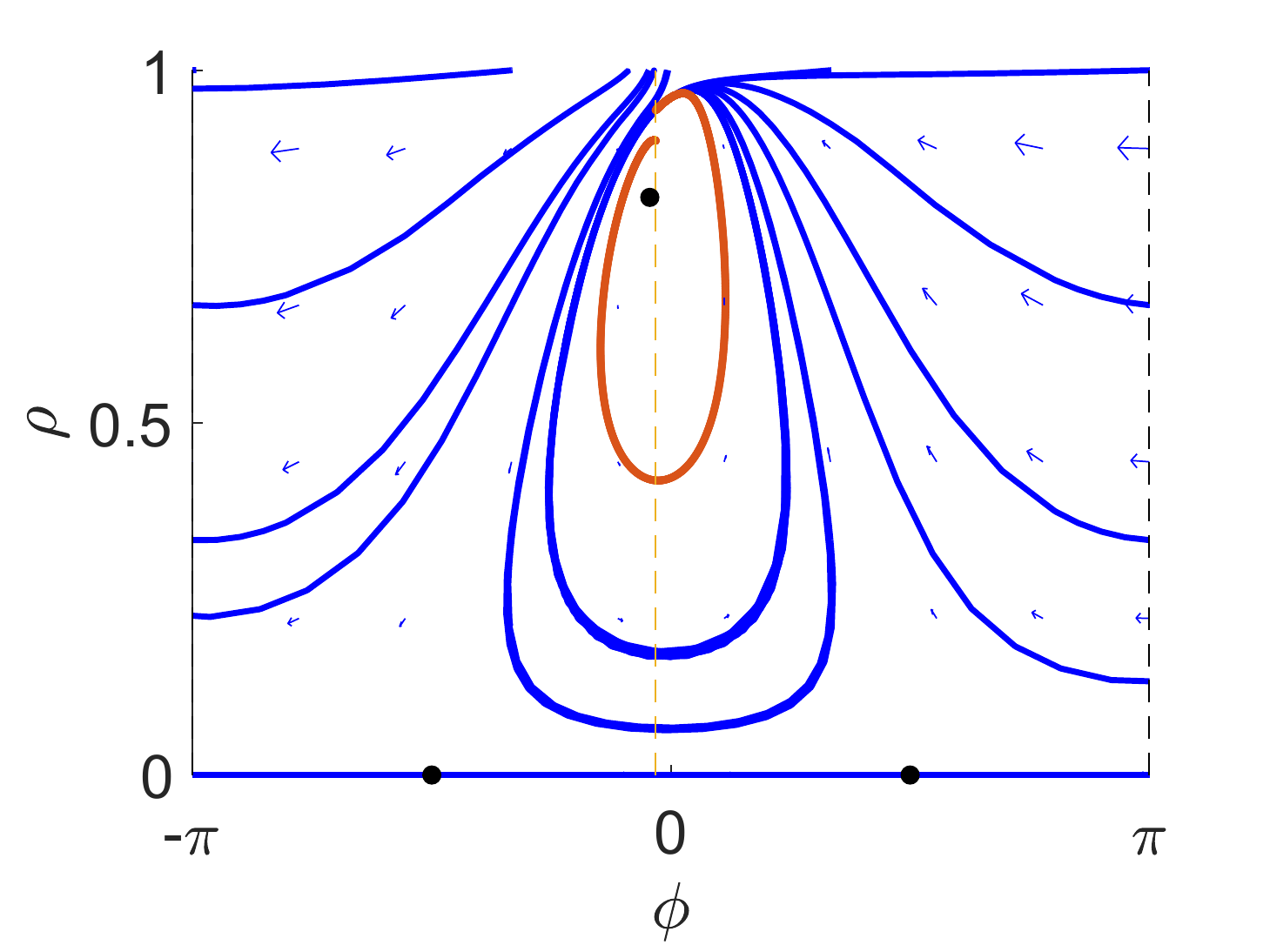}
		\textbf{b}\;	\includegraphics[width=.45\textwidth]{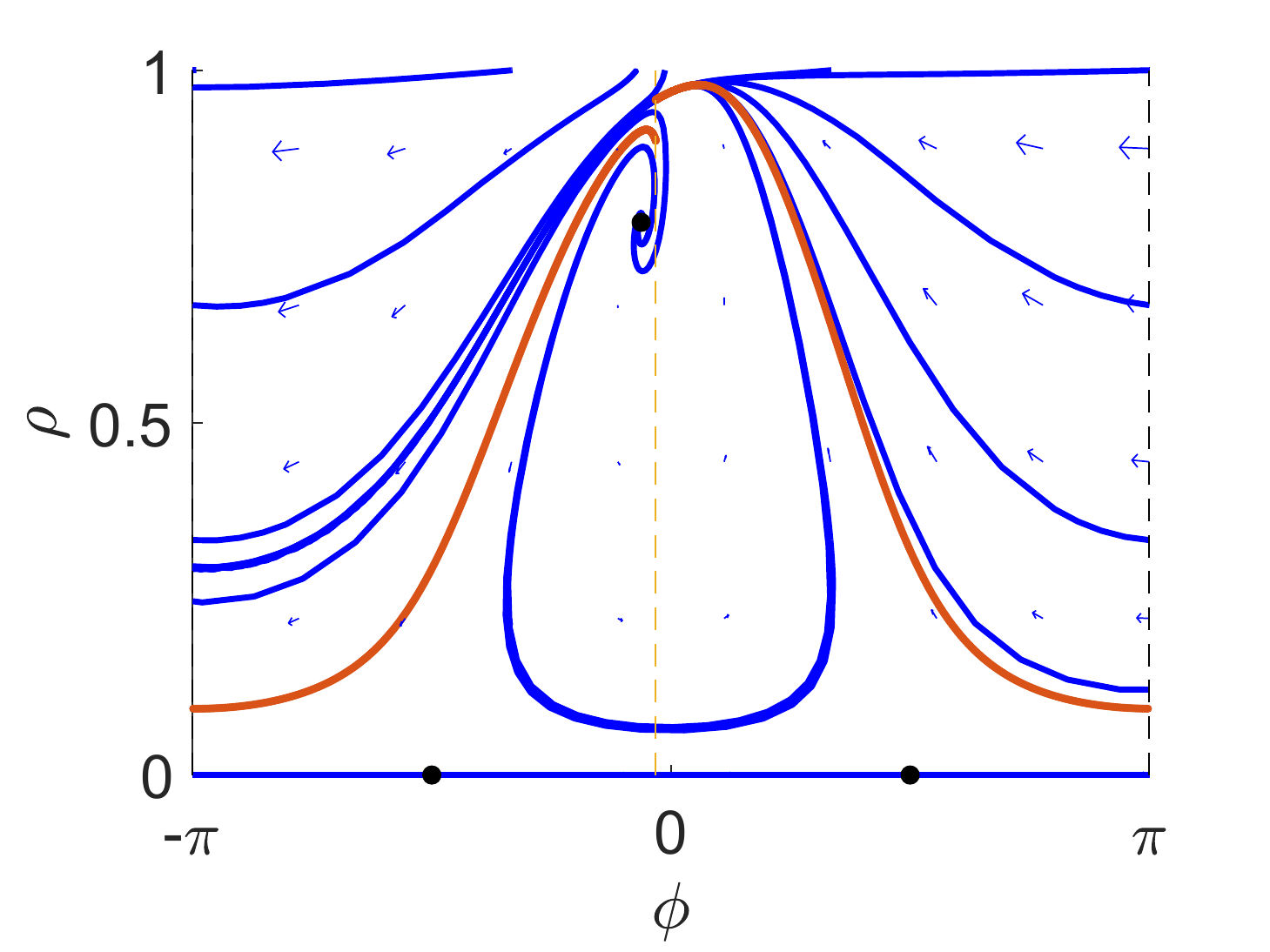}\\
		\textbf{c}\;\includegraphics[width = .45\textwidth]{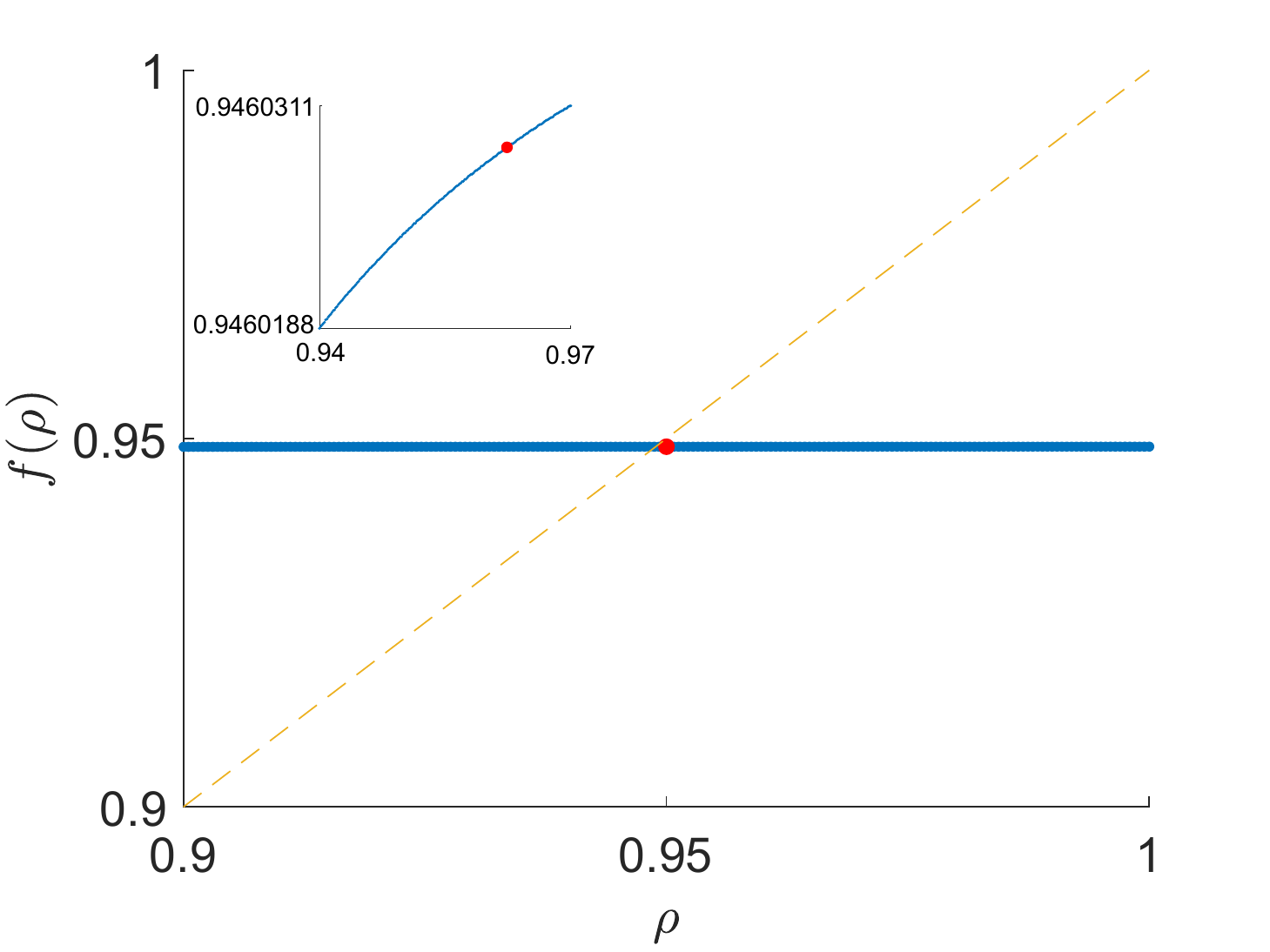}
		\textbf{d}\;	\includegraphics[width = .45\textwidth]{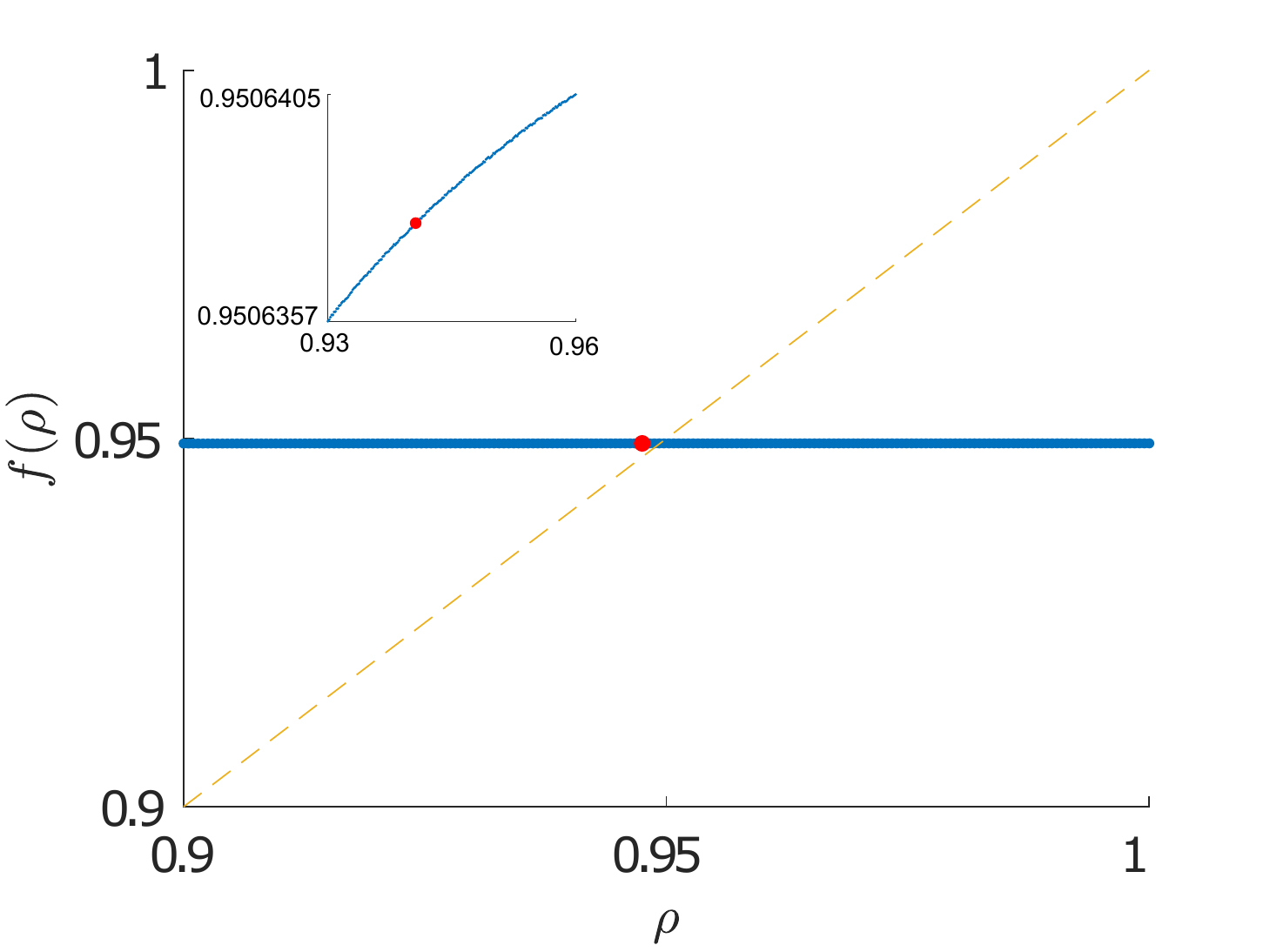}\\
		\textbf{e}\;	\includegraphics[width = .45\textwidth]{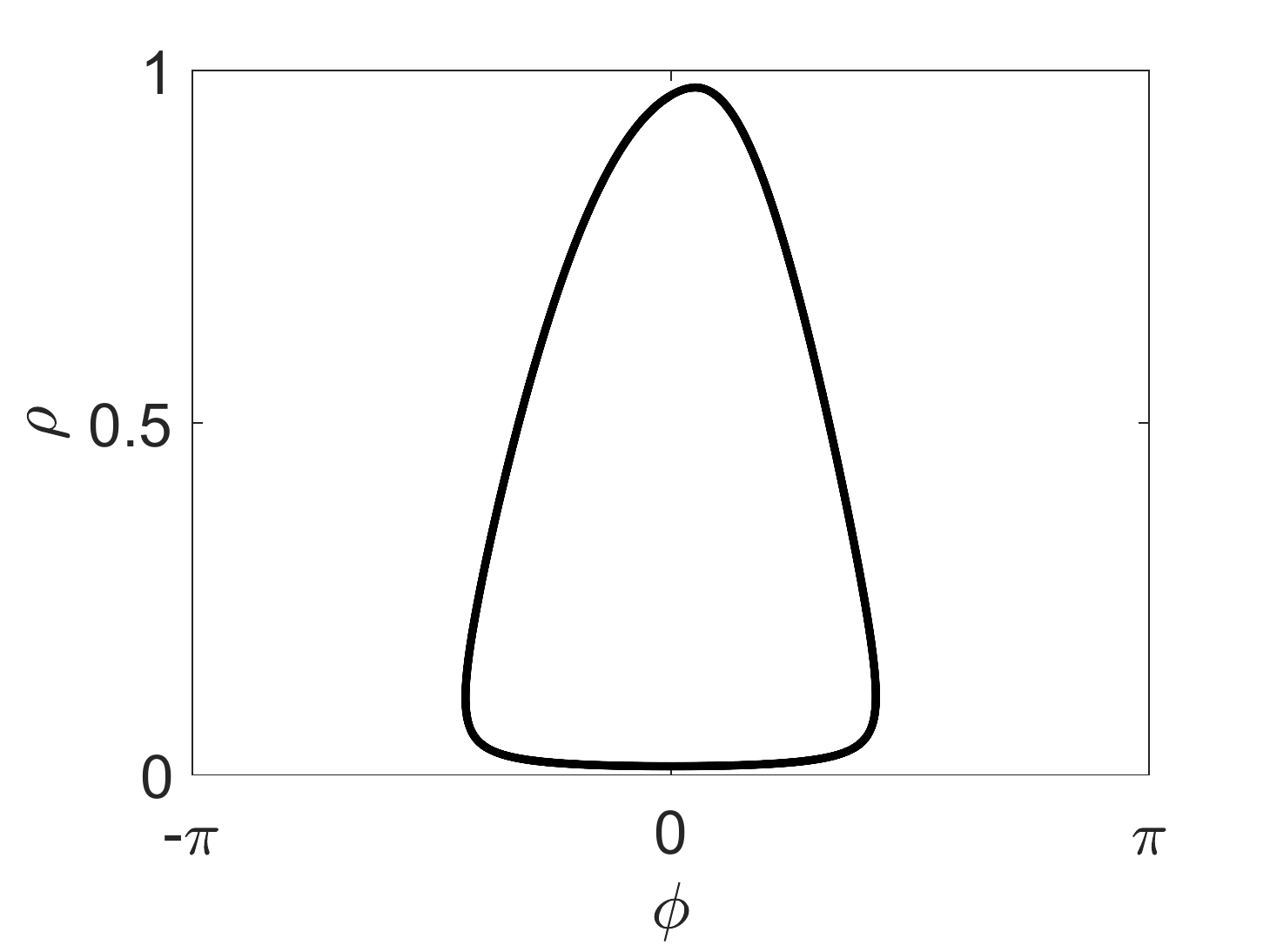}
		\textbf{f}\;	\includegraphics[width = .45\textwidth]{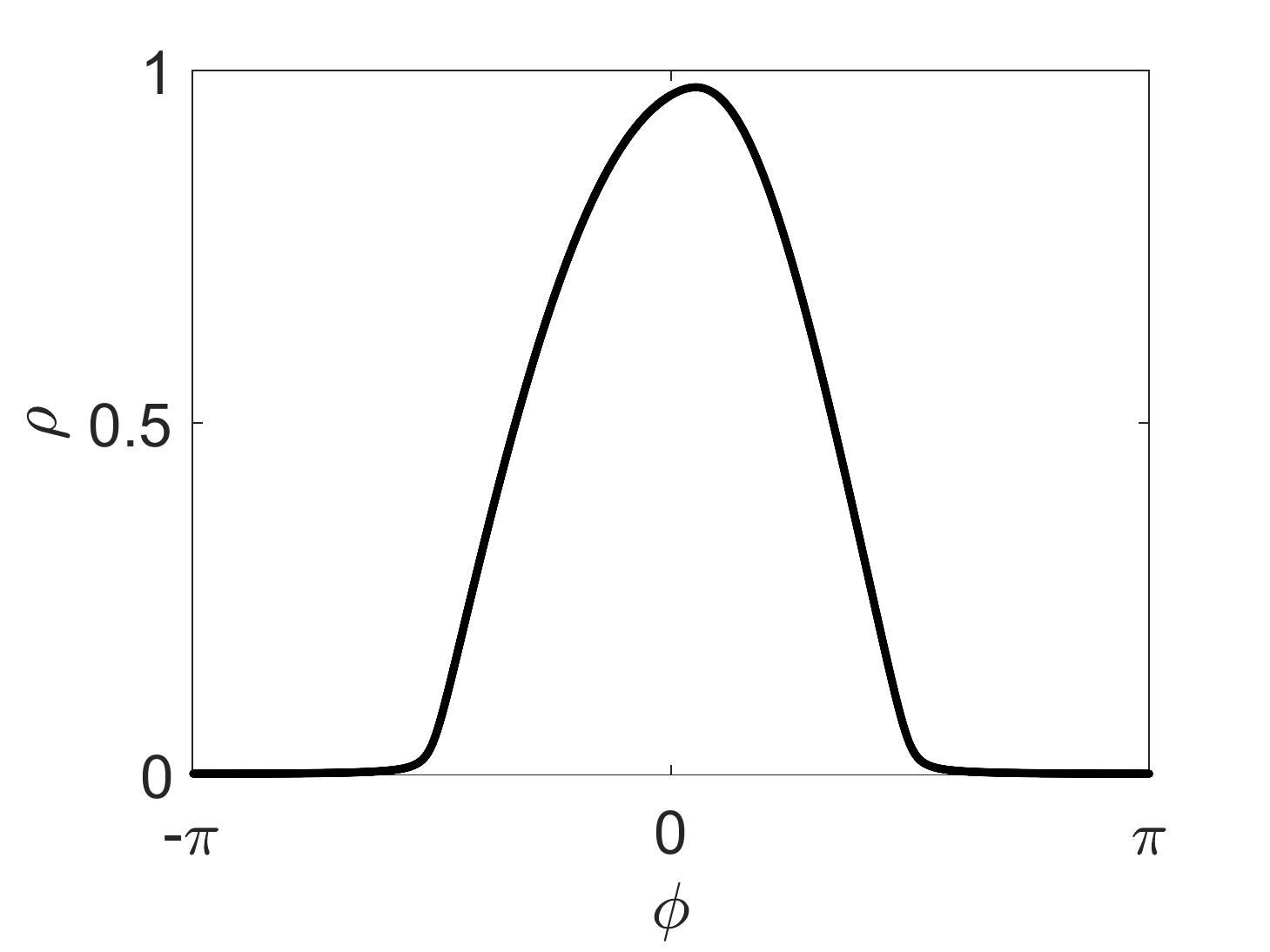}
		\caption{\textbf{a},\textbf{b}) The flow before and after the heteroclinic bifurcation ($K = 0.23$ and $K=0.3$).
			The Poincar\'e section, $\phi = -0.1$, is indicated by the
			dashed line. The red trajectories illustrate the construction of the Poincar\'e map from $\rho = 0.9$. 
			\textbf{c},\textbf{d}) The Poincar\'e map before and after the bifurcation: $K = 0.251$ and $0.253$. The red dot on the Poincar\'e section marks the point of intersection
			with $W^s(S_1)$. The Poincar\'e map is not defined at this point. 
			\textbf{e},\textbf{f}) The periodic solutions corresponding to the fixed points of the Poincar\'e  map before $(K=0.251)$ and after $(K=0.253)$ the bifurcations. Both orbits are stable. In all figures, $\epsilon = 0.1$ and $\delta = 0.01$.
		}
		\label{f.Pmap}
	\end{figure}

	\section{The boa constrictor bifurcation}
	\label{sec.heteroclinic}
	\setcounter{equation}{0}
	\begin{figure}
		\centering
		\includegraphics[width=.48\textwidth]{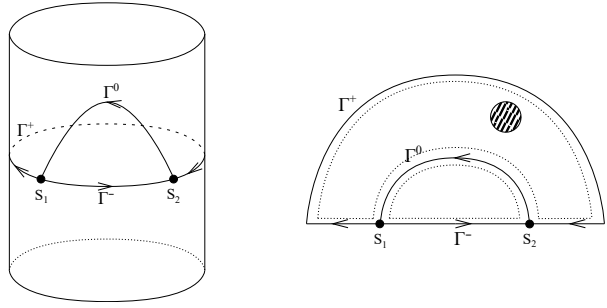}
		\caption{At at the heteroclinic bifurcation, two saddles, $S_1$ and $S_2$,
			are connected by three heteroclinic orbits $\Gamma^0$, $\Gamma^-$, and $\Gamma^+$.
			The sketch on the right explains the topology of the heteroclinic loops.}
		\label{f.boa}
	\end{figure} 
	
	The main result of this section is given in the following theorem.
	
	\begin{thm}\label{thm.boa}
		Consider a two-dimensional system on a cylinder
		\begin{equation}\label{2d}
			\dot x= f(x,\alpha), \; x\in (\R/2\pi \Z)\times\R, \alpha\in\R,
		\end{equation}
		with smooth $f$. Suppose that for all $\alpha\in\R$ there are two saddles $S_1$ and $S_2$
		connected with heteroclinic orbits $\Gamma^+$ and $\Gamma^-$
		such that $\overline{\Gamma^-\bigcup \Gamma^+}$ is a noncontractible simple closed curve.
		In addition, for $\alpha=0$ there is another heteroclinic orbit $\Gamma_0$ connecting $S_1$ and $S_2$
		such that $\overline{\Gamma^-\bigcup \Gamma_0}$ is a contractible simple closed curve (see Figure~\ref{f.boa}).
		
		Let $\lambda_{1,2}^s:=\lambda_{1,2}^s(0)<0<\lambda_{1,2}^u:=\lambda_{1,2}^u(0)$ be the eigenvalues of
		the Jacobian
		$Df(S_{1,2},0)$. Recall that $\sigma_{1,2}=|\lambda_{1,2}^s|/\lambda^u_{1,2}$ is called the
		saddle number. Further, we assume
		\begin{enumerate}
			\item
			$\sigma:=\sigma_1\sigma_2\neq 1$,
			\item
			$\beta^\prime(0)\neq 0$, where $\beta(\alpha)$ is a suitably defined
			split function (see below).
			
			Then for sufficiently small $|\alpha|$, there exists a small neighborhood of
			$$
			\Gamma:=\overline{\Gamma^-\bigcup \Gamma_0\bigcup\Gamma^+},
			$$
			which contains a unique limit cycle $\mathcal{P}_\beta$ bifurcating from $\Gamma$.
			Moreover, $\mathcal{P}_\beta$ is contractible for $\beta>0$ and noncontractible for $\beta<0$.
			It is stable if $\sigma>1$ and unstable otherwise.
		\end{enumerate}
	\end{thm}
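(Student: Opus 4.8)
The plan is to reduce the theorem to the study of a one-dimensional first-return map assembled from the local passage (Dulac) maps near $S_1$ and $S_2$ and the global transition maps along $\Gamma_0,\Gamma^+,\Gamma^-$, and then to read off existence, uniqueness, stability, and homotopy type from its leading-order form. First I would invoke Hartman's planar $C^1$-linearization theorem near each saddle and introduce coordinates $(u_i,v_i)$ near $S_i$ with $W^u(S_i)=\{v_i=0\}$, $W^s(S_i)=\{u_i=0\}$, together with cross-sections $\Sigma_i^{\mathrm{in}}=\{v_i=d\}$ and $\Sigma_i^{\mathrm{out}}=\{|u_i|=d\}$. The local Dulac map then has the power-law form $|v_i^{\mathrm{out}}|=C_i\,|u_i^{\mathrm{in}}|^{\sigma_i}(1+o(1))$ with $C_i>0$ and $\sigma_i=|\lambda_i^s|/\lambda_i^u$. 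The one geometric subtlety to record is that $W^u(S_1)$ has two branches, one feeding $\Gamma^-$ and one feeding $\Gamma^+$, so the sign of the incoming coordinate $u_1^{\mathrm{in}}$ selects the branch; moreover, since $\Gamma^+$ and $\Gamma^-$ reach $S_2$ from opposite sides of $W^s(S_2)$, the sign of the resulting exit coordinate flips with the branch. This branch selection is precisely the mechanism producing cycles of two different homotopy types.

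Next I would record the global maps: along each of $\Gamma_0,\Gamma^+,\Gamma^-$ the flow gives a smooth diffeomorphism between the appropriate exit and entry sections carrying $W^u$ into $W^s$ to leading order. I would define the split function $\beta(\alpha)$ as the signed coordinate, on $\Sigma_1^{\mathrm{in}}$, of the forward image of $W^u(S_2)$ relative to $W^s(S_1)\cap\Sigma_1^{\mathrm{in}}$; since $\Gamma_0$ exists at $\alpha=0$ we have $\beta(0)=0$, and $\beta'(0)\neq 0$ lets me use $\beta$ as the bifurcation parameter. Writing $w$ for the signed coordinate on $\Sigma_1^{\mathrm{in}}$ relative to $W^s(S_1)$ and composing the two Dulac maps with the three global maps, the first-return map takes the form
\begin{equation*}
 P_\beta(w)=\beta + A\,\operatorname{sgn}(w)\,|w|^{\sigma}\bigl(1+o(1)\bigr),\qquad \sigma=\sigma_1\sigma_2,\ A\neq 0,
\end{equation*}
where the exponent $\sigma$ is the product of the two saddle numbers from the two passages, the factor $\operatorname{sgn}(w)$ encodes the branch flip noted above, and $A$ collects the nonzero derivatives of the global maps; a check of orientations along $\Gamma_0$ gives $A<0$. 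The map extends continuously with $P_\beta(0)=\beta$, although $w=0$ lies on $W^s(S_1)$ and the corresponding orbit does not return, in agreement with the numerically observed singularity of the Poincaré map there.

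Fixed points of $P_\beta$ correspond exactly to limit cycles in a neighborhood of $\Gamma$, so it remains to solve $m(w):=w-A\operatorname{sgn}(w)|w|^\sigma=\beta$. This is where $\sigma\neq 1$ enters: since $m'(w)=1-A\sigma|w|^{\sigma-1}$, for $\sigma>1$ we have $m'(0^\pm)=1>0$, while for $\sigma<1$ with $A<0$ we have $m'\to+\infty$, so in either case $m$ is a strictly increasing homeomorphism of a small interval, giving for each small $\beta$ a unique $w^\ast=w^\ast(\beta)$ with $w^\ast(0)=0$ and $\operatorname{sgn}(w^\ast)=\operatorname{sgn}(\beta)$; this produces a unique cycle on each side of $\beta=0$. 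Its stability follows from $|P_\beta'(w^\ast)|=|A|\sigma|w^\ast|^{\sigma-1}(1+o(1))$, which tends to $0$ when $\sigma>1$ (contraction, stable cycle) and to $\infty$ when $\sigma<1$ (expansion, unstable cycle). Finally, the sign of $w^\ast$ fixes the branch and hence the homotopy type: a short winding-number computation on the cylinder, using that $\overline{\Gamma^-\cup\Gamma^+}$ is noncontractible and $\overline{\Gamma^-\cup\Gamma_0}$ is contractible (whence $\overline{\Gamma^+\cup\Gamma_0}$ is noncontractible), shows that $w^\ast>0$ gives a cycle near $\overline{\Gamma^-\cup\Gamma_0}$, hence contractible, and $w^\ast<0$ gives one near $\overline{\Gamma^+\cup\Gamma_0}$, hence noncontractible. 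Combined with $\operatorname{sgn}(w^\ast)=\operatorname{sgn}(\beta)$, this yields contractible for $\beta>0$ and noncontractible for $\beta<0$.

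The hard part will be the first two steps: establishing the power-law form of $P_\beta$ with an $o(1)$ remainder controlled uniformly in $\alpha$ for small $|\alpha|$, which rests on the $C^1$-linearization and on honest a priori estimates for the saddle passages, together with the orientation bookkeeping forced by the two-branch structure at $S_1$ (and the resulting sign of $A$). Once $P_\beta(w)=\beta+A\operatorname{sgn}(w)|w|^\sigma(1+o(1))$ with $A<0$ is in hand, the fixed-point, stability, and homotopy arguments are comparatively routine.
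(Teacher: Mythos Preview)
Your proposal is correct and follows essentially the same route as the paper: assemble the first-return map on a section transverse to $\Gamma_0$ as a composition of local Dulac maps near $S_1,S_2$ with global transition diffeomorphisms along $\Gamma_0,\Gamma^\pm$, obtain the leading-order form $P(\xi)=\beta+C\,|\xi|^{\sigma}+\text{h.o.t.}$ with $\sigma=\sigma_1\sigma_2$, and read off existence, uniqueness, stability, and homotopy type from the fixed-point equation. Your write-up is in places more explicit than the paper's (invoking Hartman's $C^1$-linearization, tracking the branch selection via $\operatorname{sgn}(w)$, and giving a monotonicity argument for the fixed point), while the paper simply splits into the cases $\beta>0$ and $\beta<0$ with positive coefficients $a(a^\pm)^{\sigma_2}$ and asserts the fixed point directly.

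One caution on a detail you yourself flag as delicate: in the paper's (natural) conventions the leading coefficient comes out \emph{positive}, not negative, so your anticipated sign $A<0$ deserves a second look when you actually do the orientation bookkeeping. This does not affect the $\sigma>1$ case, but your monotonicity argument for $\sigma<1$ relies on $A<0$ (otherwise $m'(w)=1-A\sigma|w|^{\sigma-1}\to-\infty$ rather than $+\infty$), so that branch of the argument would need to be reworked.
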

	
	\begin{proof} The proof employs a standard scheme for analyzing global bifurcations,
		which  goes back to the proof of Andronov-Leontovich Theorem
		(cf.~\cite[Theorem 6.1]{Kuz-Bifurcations}).
		
		First, we introduce local cross sections $\Sigma_{1,2}$ and $\Pi_{1,2}^{\pm}$ near the
		saddles (see Fig.~\ref{f.return}). Next we construct the following flow-defined maps 
		\begin{align*}
			P_1: & \Sigma_1\to \Pi_1^-\cup\Pi_1^+,\\
			P_2: & \Sigma_2\to \Pi_2^-\cup\Pi_2^+,\\
			Q^-: & \Pi_1^-\to \Pi_2^-,\\
			Q^+: & \Pi_1^+\to \Pi_2^+,\\
			R: & \Sigma_2\to\Sigma_1.
		\end{align*}
		The near-to-saddle maps $P_1$ and $P_2$ capture the local dynamics near $S_1$ and $S_2$.
		$P_1$ and $P_2$ do not depend on $\alpha$  to leading order for small $|\alpha|$.
		Maps $Q^-$ and $Q^+$ are defined by the flow near $\Gamma^-$ and $\Gamma^+$.
		These maps do not depend on $\alpha$. Finally, $R$ is defined by the flow in a small vicinity
		of $W^u(S_2)$.
		
		On $\Sigma_1$ we select a system of coordinates such that $\xi=0$
		corresponds to the point of intersection of $\Sigma_1$ and
		$W^s(S_1)$. Then the coordinate $\xi$ of the point of intersection of
		$W^u(S_2)$ with $\Sigma_1$ defines the split function:
		$$
		\beta(\alpha):=\xi.
		$$

		The near-saddle map $P_1$ is computed from the following 
		linear system after a suitable change of coordinates near $S_1$ (cf.~\cite[Chapter 9]{IlyWei})
		\begin{equation}\label{saddle}
			\begin{split}
				\dot \xi & =\lambda_1^u \xi,\\
				\dot \eta&= \lambda_2^s\eta.
			\end{split}
		\end{equation}
		in the neighborhood of the origin with cross-sections $\Sigma_1=\{(\xi,1): \; \xi\in [-1,1]\}$,
		$\Pi_1^-= \{(1, \eta): \; \eta\in [0,1]\}$, and $\Pi_1^+= \{(-1, \eta): \; \eta\in [0,1]\}$ (see Fig.~\ref{f.saddle}).
		A standard computation yields
		$$
		P_1(\xi)=\xi^{\sigma_1}.
		$$
		Similarly, we compute
		$$
		P_2(\xi)=\xi^{\sigma_2}.
		$$
		The global maps are given by
		\begin{align}\label{compute-maps}
			Q^-(\eta) &= a^-\eta + O(\eta^2),\\
			Q^+(\eta) & =a^+\eta +O(\eta^2),\\
			R(\xi)& =\beta + a\xi +O(\xi^2),
		\end{align}
		where coefficients $a^-, a^+,$ and $a$ are positive and $\beta$ is the split function.

		We can now compute the first return map
		\begin{equation}\label{def-P}   
			P=
			\left\{
			\begin{array}{ll}
				R\circ P_2 \circ Q^- \circ P_1, & \beta>0,\\
				R\circ P_2 \circ Q^+ \circ P_1, & \beta<0.
			\end{array}
			\right.
		\end{equation}

		The combination of \eqref{compute-maps} and \eqref{def-P}  yields
		\begin{equation}\label{compute-P}   
			P(\xi)=
			\left\{
			\begin{array}{ll}
				\beta+ a(a^-)^{\sigma_2} \xi^{\sigma} +\mbox{higher order terms}, & \beta>0,\\
				\beta+ a(a^+)^{\sigma_2} \xi^{\sigma} +\mbox{higher order terms}, & \beta<0.
			\end{array}
			\right.
		\end{equation}
		From \eqref{compute-P}, one can see that $P$ has a fixed point $\bar\xi_\beta$ in a neighborhood
		of the origin for small $|\beta|\neq 0$. Further, $\bar\xi_\beta \beta >0$ for $\beta\neq 0$. 
		Finally, $\bar\xi_\beta$ is stable if $\sigma>1$ (dissipative case) and is unstable if $\sigma<1$.
	\end{proof}
	\begin{rem} If $\sigma=1$ as in the case of \eqref{system} the stability of the bifurcating
		orbits is determined by the positive coefficients $a$, $a^-$, and $a^+$. Specifically, $\mathcal{P}_\beta$
		is stable if $a(a^-)^{\sigma_2}<1$ for $\beta>0$ and  $a(a^+)^{\sigma_2}$ for $\beta<0$.
		In this case, the stability of the orbit is determined by the degree of contraction produced by the
		global maps defined by the flow
		along the heteroclinic loops rather than by that of the local near-to-saddle maps.
		While obtaining analytical estimates of the contraction of the global maps requires tedious calculations,
		numerical first return maps in Figure~\ref{f.Pmap}\textbf{c},\textbf{d} unequivocally demonstrate
		strong contraction.
	\end{rem}
	
	\begin{figure}[h]
		\centering
		\includegraphics[width=.48\textwidth]{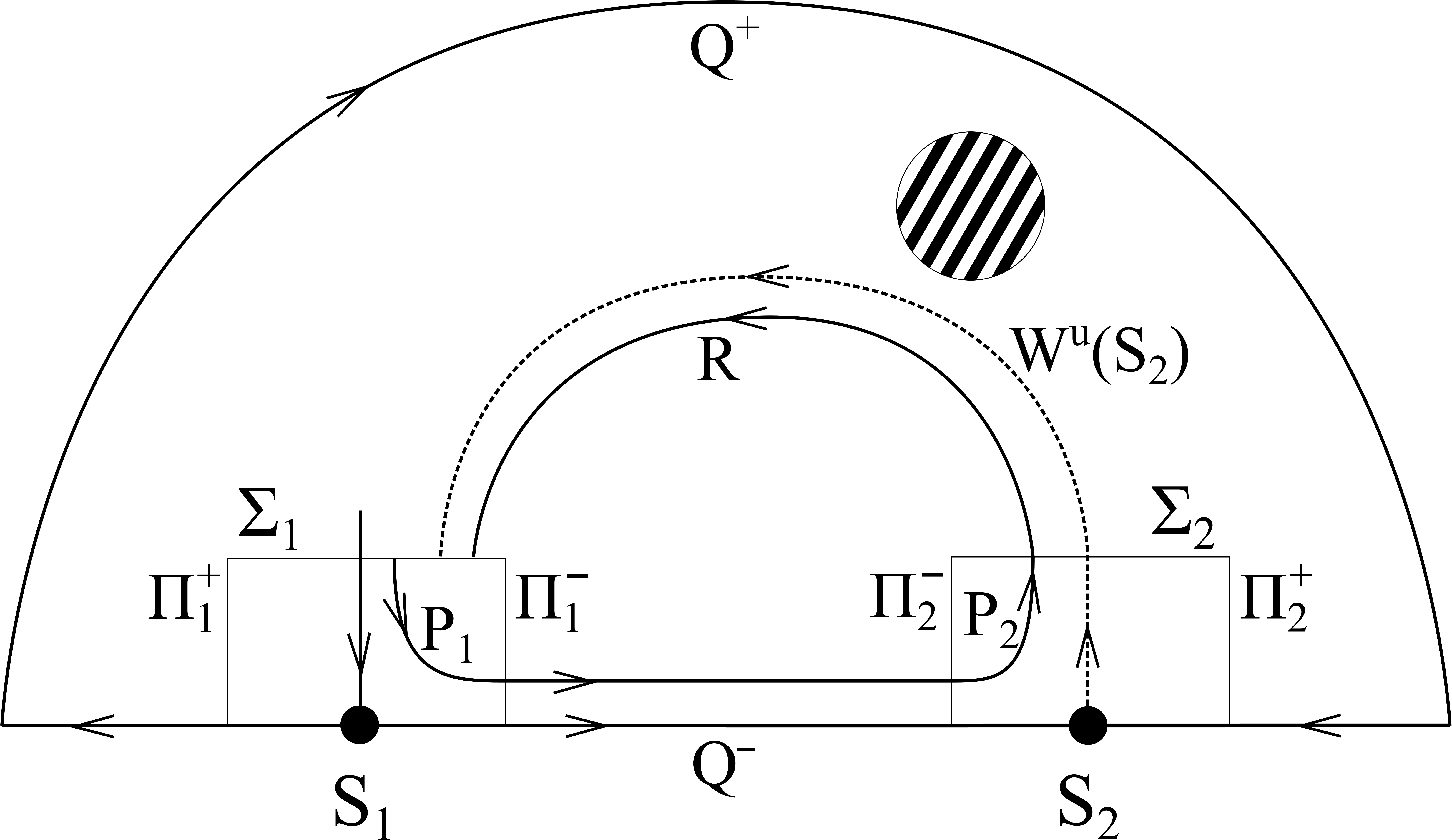}
		\caption{The first return map (see text for details).}
		\label{f.return}
	\end{figure} 
	
	\begin{figure}[h]
		\centering
		\includegraphics[width=.6\textwidth]{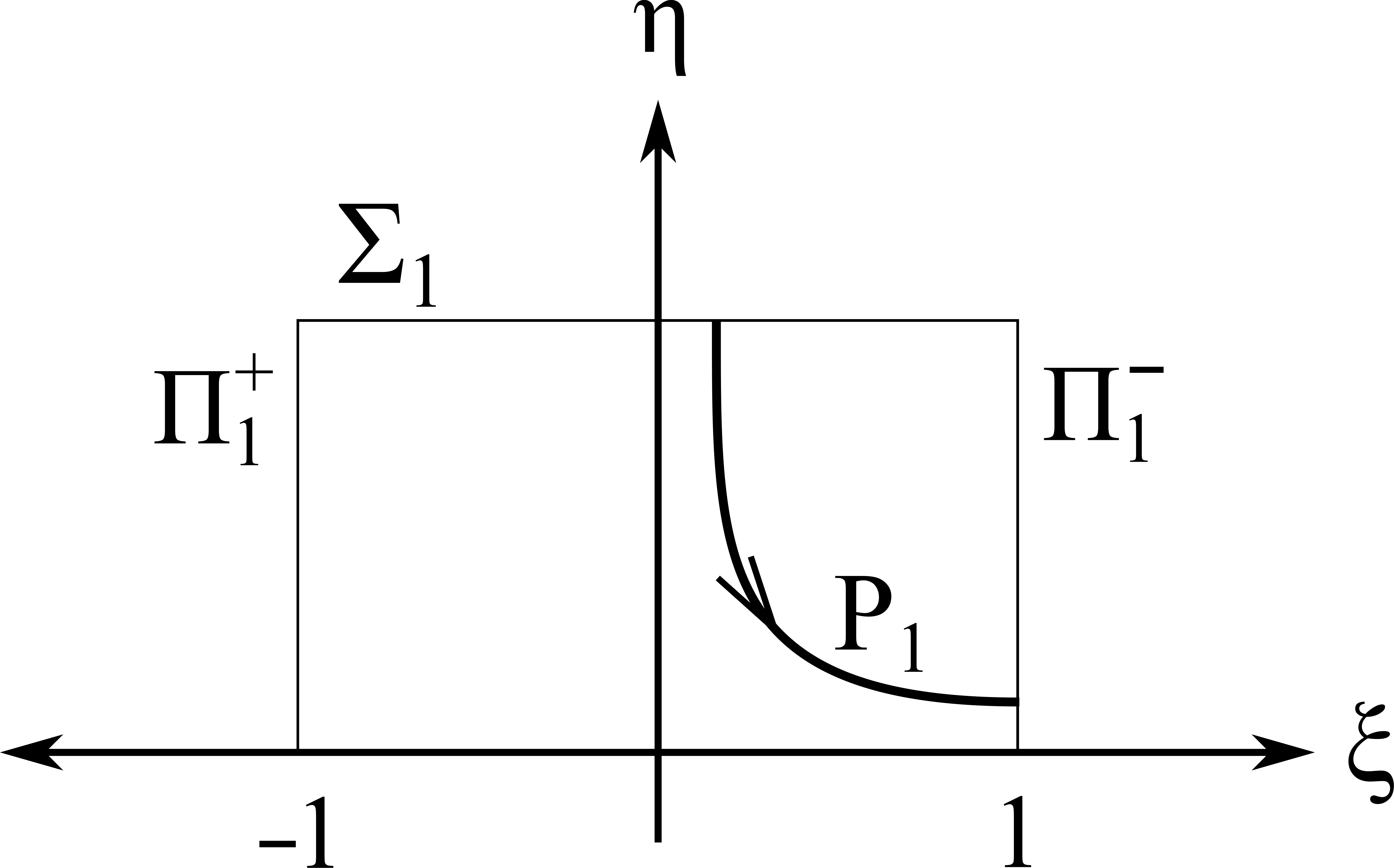}
		\caption{The local (near the saddle) component of the first return map.}
		\label{f.saddle}
	\end{figure}

	\section{Collective dynamics}\lbl{sec.collective}
	\begin{figure}
		\centering
		{\bf a}\includegraphics[width=	.45\textwidth]{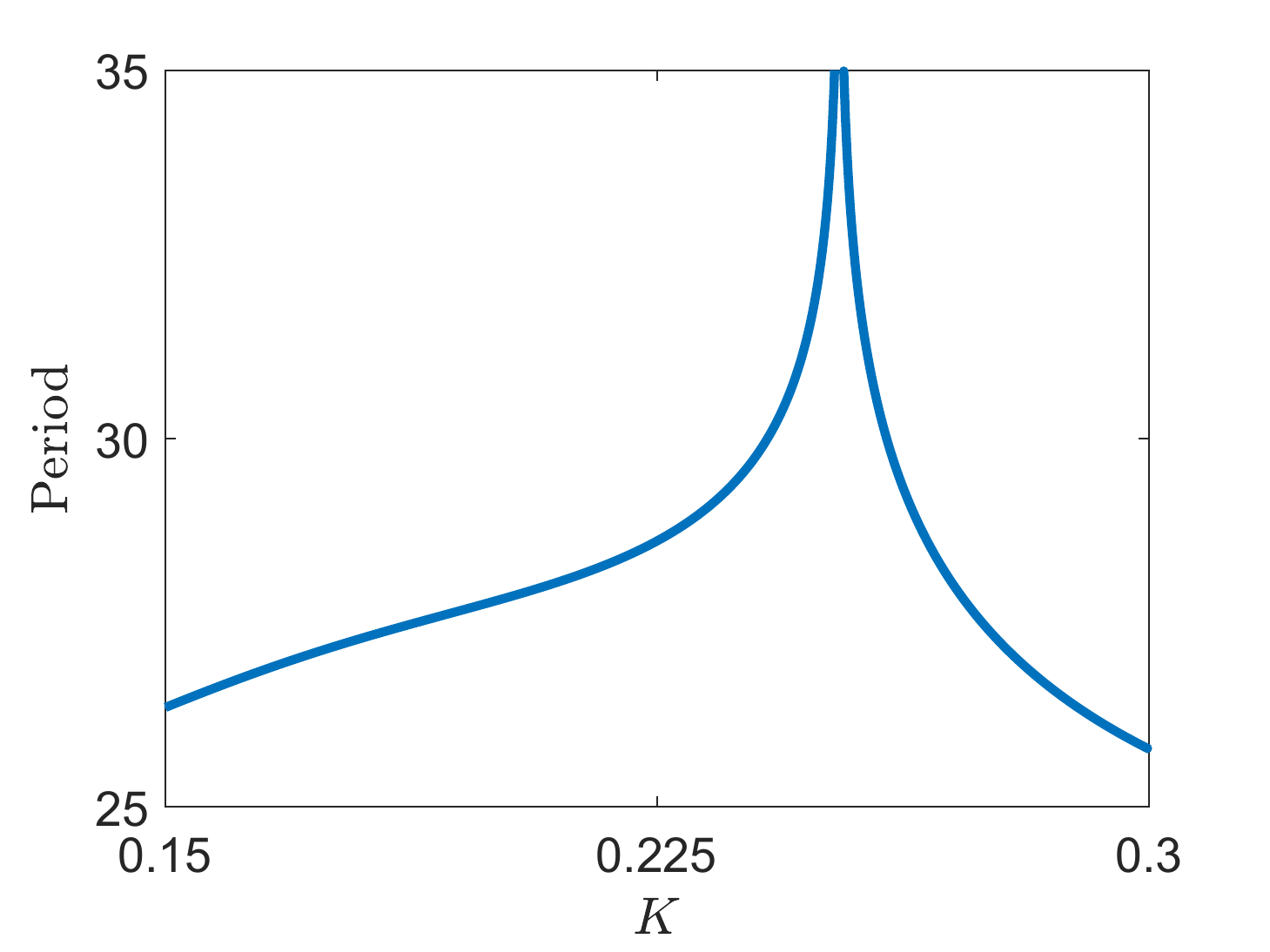}
		{\bf b}\includegraphics[width=	.45\textwidth]{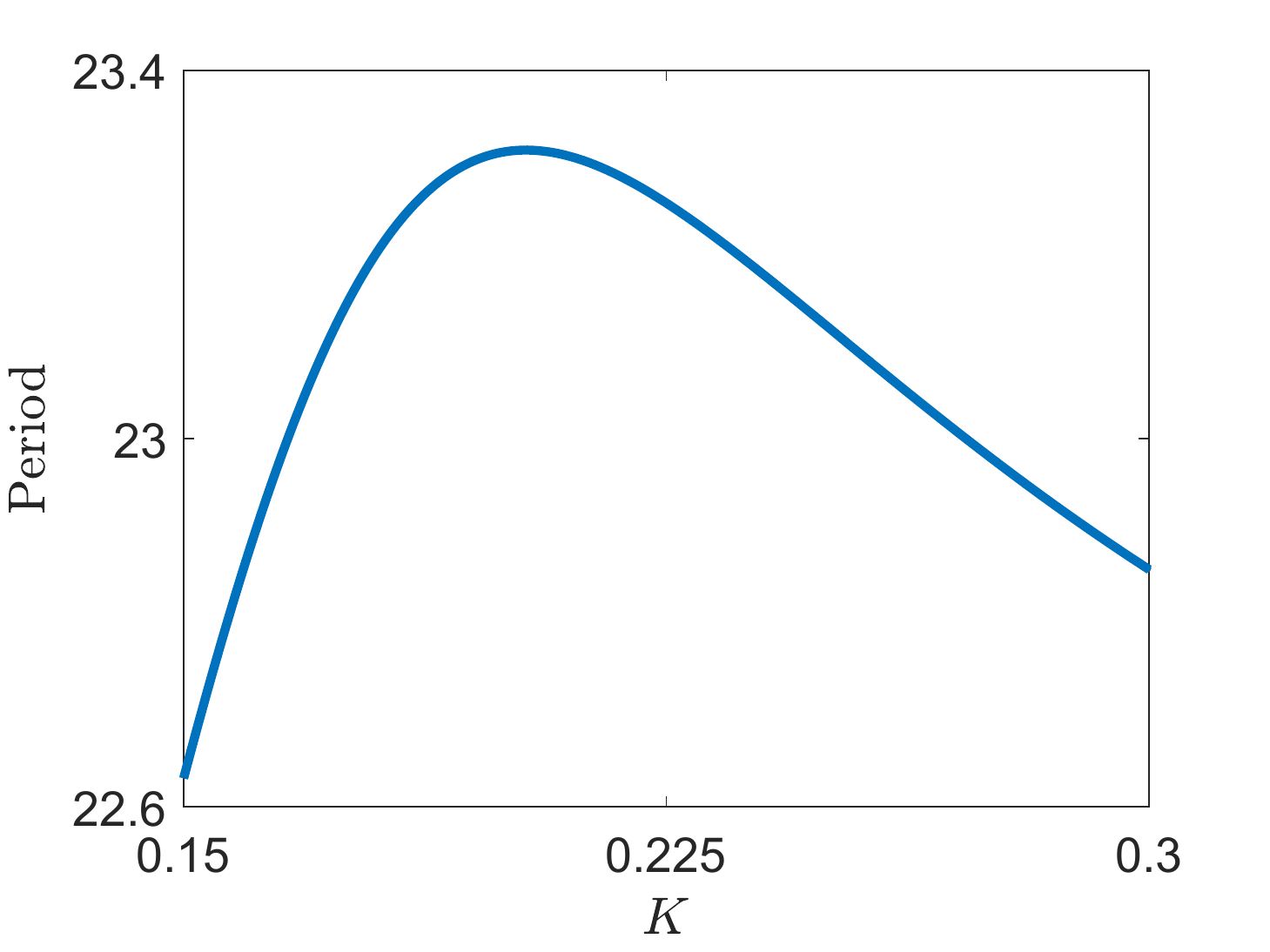}
		\caption{The period near the heteroclinic bifurcation for the systems with rescaled
			and original time (plots \textbf{a} and \textbf{b} respectively; cf. \eqref{system}
			and \eqref{2odes}). Parameters are $\epsilon = 0.1$ and $\delta = 0.01$.
		} 
		\label{f.period}
	\end{figure}
	\begin{figure}
		\centering    
		\textbf{a}\;	\includegraphics[width = .45\textwidth]{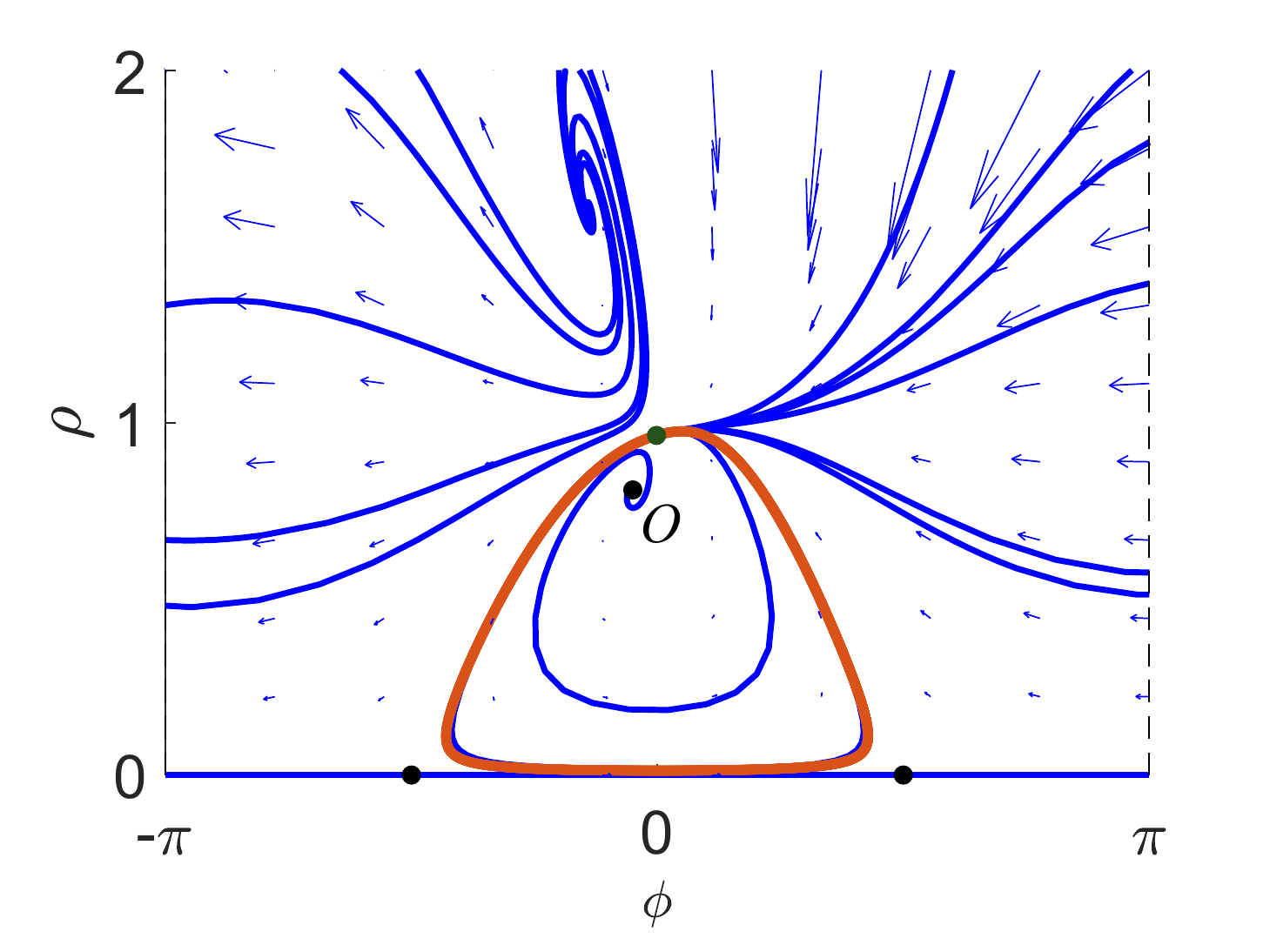}
		\textbf{b}\;	\includegraphics[width = .45\textwidth]{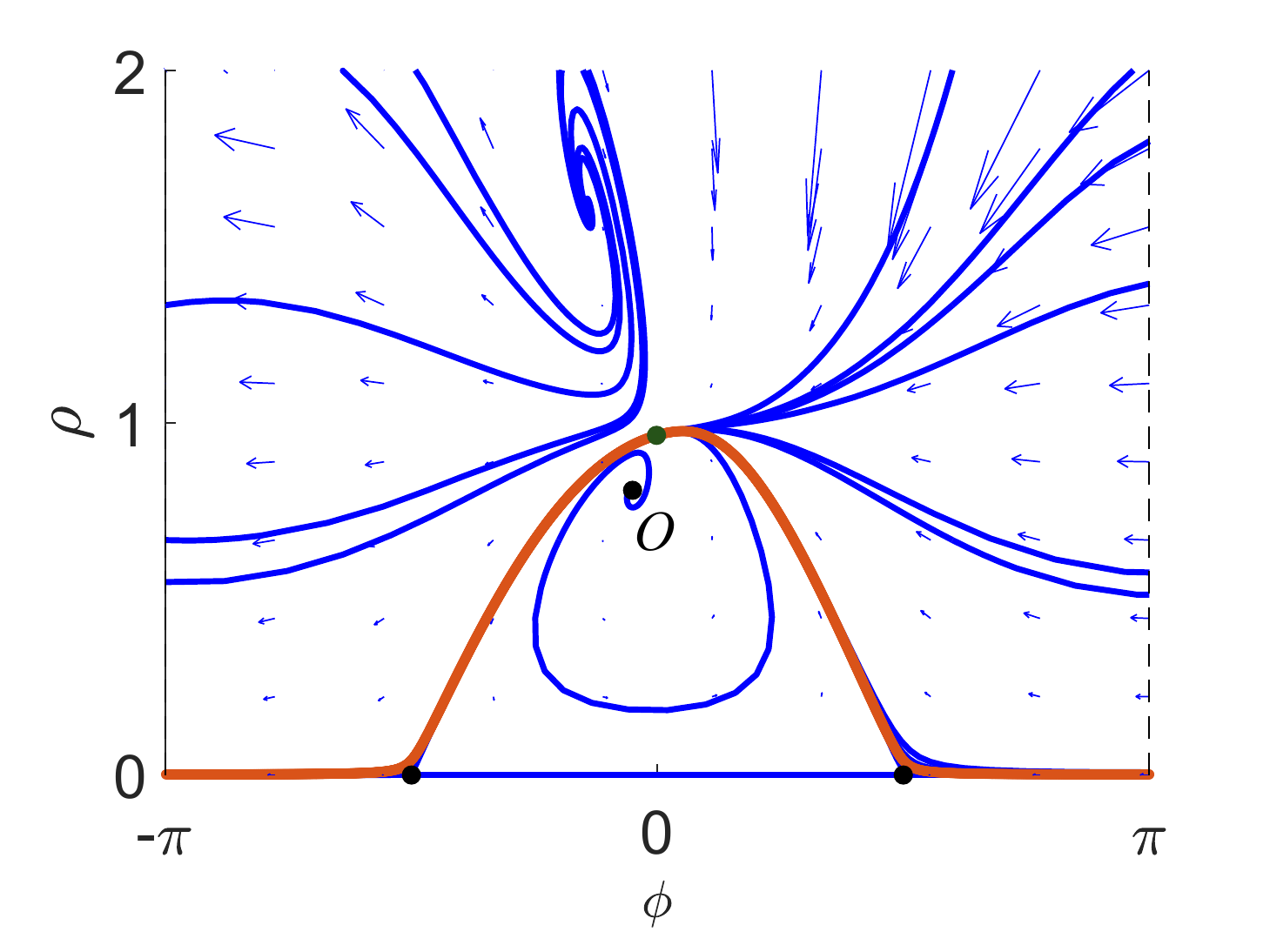}
		\caption{ The phase portraits before (\textbf{a}) ($K=0.251$) and after (\textbf{b}) ($K=0.253$) the heteroclinic
			bifurcation show
			that the periodic trajectory (shown in red) passes near the unstable focus
			$O$ in each case. $O$ is close to $\{\rho =1\}$ because the system is still
			close to the Bogdanov-Takens bifurcation. The green dot indicates
			the region where the periodic trajectory spends most of the period. Other parameters are chosen as above: $\epsilon = 0.1$, $\delta = 0.01$.
		} 
		\label{f.portraits}
	\end{figure}
	
	Having analyzed the reduced system \eqref{system}, we now return to the description
	of the collective dynamics of \eqref{KM}. The analysis in the previous two sections shows
	that the heteroclinic bifurcation separates two topologically distinct families of the limit
	cycles of \eqref{system} ${\mathcal P}_K$ for $K<K_{HC}$ and $K>K_{HC}$. In either case,
	${\mathcal P}_K$ consists of a segment lying in $V_c=\{0\le \rho<c\}$ for some $0<c\ll 1$
	and an arc outside $V_c$ (see Fig.~\ref{f.boa}). Thus, in one cycle of oscillations
	the composition of the population of oscillators changes from high incoherence ($\rho\approx 0$)
	to high coherence ($\rho\approx 1$).
	
	To estimate the duration of each of these phases and the period of oscillations note that
	the incoherent phase is determined by the time the periodic trajectory spends near the saddles,
	which can be easily estimated from \eqref{saddle}. The trajectory of \eqref{saddle} starting
	from $(\beta, 1)$ leaves the strip $|\xi|\le 1$ after time
	\begin{equation}\label{T1}
		T_1\simeq \frac{1}{\lambda_1^u} \ln\frac{1}{|\beta|}.
	\end{equation}
	Furthermore, at the time of exit
	\begin{equation}\label{exit-xi}
		\eta=|\beta|^{\sigma_1}.
	\end{equation}
	Thus, the trajectory enters the neighborhood of $S_2$ with $\eta=O\left(|\beta|^{\sigma_1}\right).$
	From this, we can estimate the time it spends in the vicinity of $S_2$:
	\begin{equation}\label{T2}
		T_2\simeq \frac{\sigma_1}{\lambda_1^u} \ln\frac{1}{|\beta|}.
	\end{equation}
	From \eqref{T1} and \eqref{T2}, we  estimate the period of oscillations near the heteroclinic
	bifurcation:
	\begin{equation}\label{period}
		T\simeq \max\left\{\frac{1}{\lambda_1^u} \ln\frac{1}{|\beta|},
		\frac{\sigma_1}{\lambda_2^u} \ln\frac{1}{|\beta|}\right\}.
	\end{equation}
	(see Fig.~\ref{f.period}\textbf{a}).
	%
	
	After going back to the original time, for the oscillations in \eqref{2odes} we obtain:
	\begin{flalign*}
		T_1^\prime &= \int_0^{T_1} \rho(\tau) d\tau \simeq \int_0^{T_1} e^{\lambda_1^s \tau}d\tau\\
		&\simeq \frac{1-C_1 |\beta|^{\sigma_1}}{|\lambda_1^s|},
	\end{flalign*}
	and similarly
	\begin{equation*}
		T_2^\prime \simeq \frac{1-C_1 |\beta|^{\sigma}}{|\lambda_2^s|}.
	\end{equation*}
	Thus, in the original time the period of oscillations remains
	finite (see Fig.~\ref{f.period}~\textbf{b}). Furthermore, since the points on $W^s(S_1)$ hit $S_1$
	in finite time,  the heteroclinic orbit reaches the saddle in finite time too.
	Recall that $\rho=0$ is the image of the origin in $\R^2$
	under the polar coordinate transformation. The origin is not a fixed point of \eqref{cart}. Thus,
	the heteroclinic orbit of \eqref{system} for $K=K_{HC}$ corresponds to a periodic orbit
	of \eqref{cart} passing through the origin in the cartesian coordinates (Figure~\ref{f.cart}).
	Thus, the family of periodic orbits $\mathcal{P}_K, K>K_{AH}$ of \eqref{system}
	corresponds to a family of periodic orbits $\mathcal{Z}_K$ of \eqref{cart}.
	The period of $\mathcal{Z}_K$ achieves its maximum at $K=K_{HC}$
	(see Fig.~\ref{f.period}~\textbf{b}). The discrepancy in different locations of the maxima
	in plots \textbf{a} and \textbf{b} in Figure~\ref{f.period} is explained by the fact that
	we dropped higher order terms in \eqref{BT-delta-epsilon}.

	Having described the `incoherent' portion of the periodic orbit lying in a neighborhood
	of $\{\rho=0\}$, we now turn to the complimentary portion lying in a neighborhood
	of $\Gamma_0$ (see Fig.~\ref{f.portraits}). First note that outside a small neighborhood of $\{\rho=0\}$
	the rescaling of time no longer has a qualitative impact on the system's dynamics.
	Further, on $\rho=1$ the vector field of \eqref{system} is pointed downward
	(see the $\rho$-equation in \eqref{system}). On the other hand, the unstable
	focus $O$ has not moved too far from $\{\rho=1\}$, where it emerged at the Bogdanov-Takens
	bifurcation (see Figure~\ref{f.portraits}). Thus, $\Gamma_0$ has to pass in a region
	between $\{\rho=1\}$ and
	the unstable focus $O$ (Figure~\ref{f.portraits}). This forces
	$\Gamma_0$ to pass close to $O$ where the vector field is very weak. Consequently,
	a significant portion of the period is spent near $\{\rho=1\}$, which correspond
	to the `coherent' phase of oscillations. This results in an interesting scenario,
	for which the separation of the
	timescales in oscillations of the order parameter just before and after the
	heteroclinic bifurcation (Figure~\ref{f.order}) is not due to the proximity of the
	heteroclinic bifurcation, as one would be tempted to assume, but rather to the
	proximity to the Bogdanov-Takens bifurcation. Thus, both the Bogdanov-Takens
	and the heteroclinic bifurcations have an impact on the qualitative features of the
	collective dynamics. 
	\begin{figure}
		\centering    
		\textbf{a}\,\includegraphics[width = .3\textwidth]{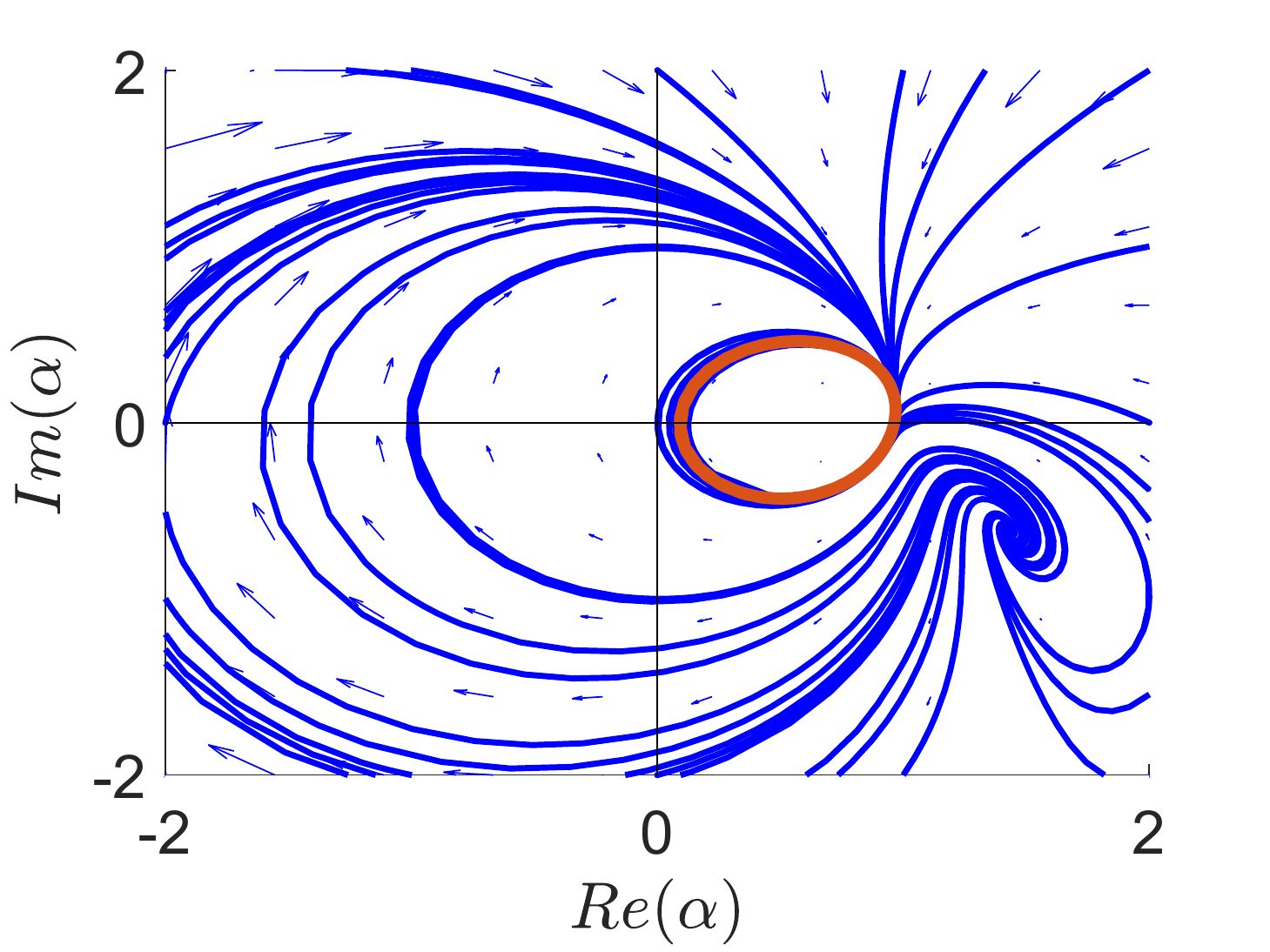}
		\textbf{b}\,\includegraphics[width = .3\textwidth]{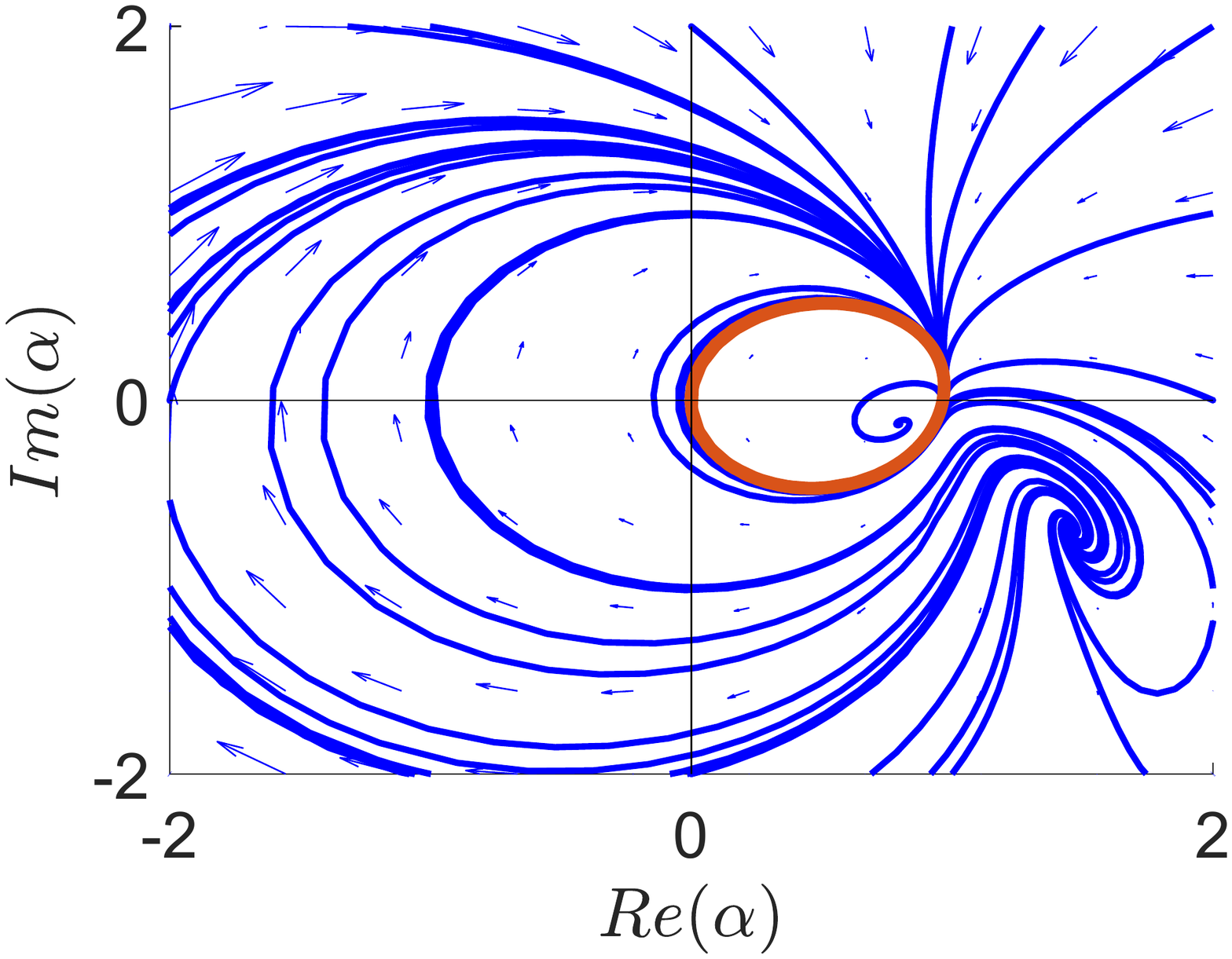}
		\textbf{c}\,\includegraphics[width = .3\textwidth]{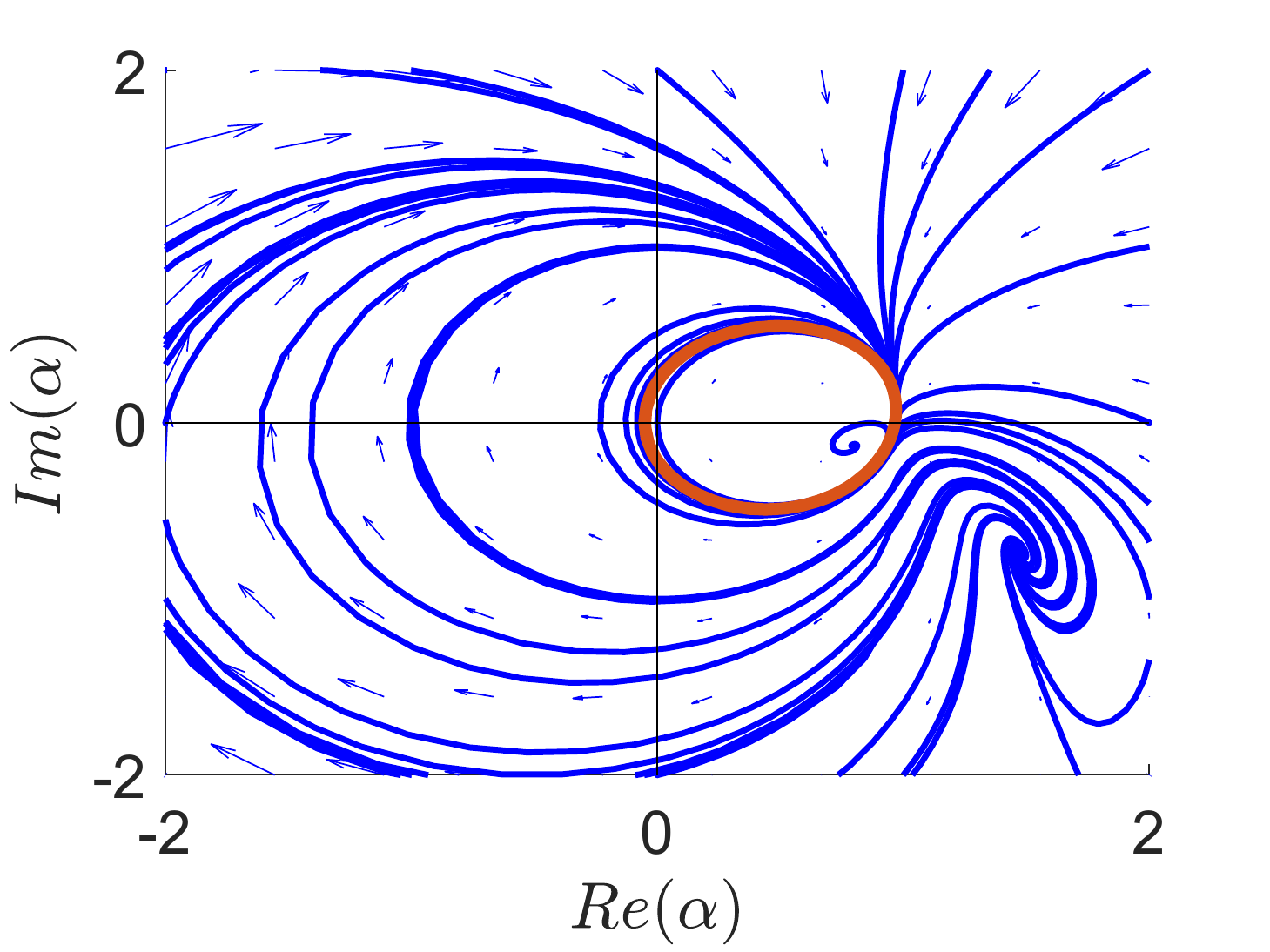}
		\caption{ The phase portraits of \eqref{cart} plotted in
			Cartesian coordinates. The middle plot (\textbf{b}) shows the periodic
			orbit passing through the origin. The two outer plots show
			the periodic orbits before (\textbf{a}) and after (\textbf{c})
			this event. Parameters are chosen as above: $\epsilon = 0.1$ and $\delta = 0.01$, with $K = 0.24, 0.253, 0.26$ from left to right.
		} 
		\label{f.cart}
	\end{figure}
	
	We conclude with several remarks on the relation between
	\eqref{cart}  and \eqref{system}. Recall that we switched to polar coordinates to be
	able to track the modulus and the argument of the order parameter, which capture
	macroscopic dynamics. The polar coordinate transformation is a diffeomorphism
	of $\R^2/\{(0,0)\}$ to $\R^+\times \T$, but its is not a bijection at the origin.
	Consequently \eqref{cart} and
	\eqref{system} are topologically equivalent only when restricted to $\R^2/\{(0,0)\}$ and
	$\R^+\times \T$ respectively. In particular, the Andronov-Hopf and Bogdanov-Takens bifurcations
	describing local transformations of the vector field \eqref{system} in  closed domains
	of $\R^+\times \T$ translate automatically to the corresponding bifurcations of 
	\eqref{cart}. On the other hand the heteroclinic bifurcation, which we analyzed for
	\eqref{system} involves the set $\rho=0$, which lies outside $\R^+\times \T$.
	The heteroclinic
	orbit connecting $\Gamma^0$ (Figure~\ref{f.boa}) corresponds to a periodic orbit
	of \eqref{cart} passing through the origin (Figure~\ref{f.cart}).
	This not a bifurcation of \eqref{cart} as a vector field on $\R^2$, but it is a bifurcation
	on $\R^2/\{(0,0)\}$. This  is a bifurcation, because a contractibe periodic orbit
	before hitting the origin becomes a noncontractibe one after this event.
	This is a border collision bifurcation.
	Clearly, this bifurcation is a consequence of our using the polar coordinate transformation,
	which is singular at the origin. Nonetheless, both the heteroclinic and the border collision
	bifurcations are relevant in the context of the macroscopic dynamics. While in the cartesian
	coordinates, the passing of the periodic orbit
	of \eqref{cart} through the origin is a regular event, it represents a transition
	point in the description of the macroscopic dynamics. This is the point where the oscillations
	of the center of mass of the population of oscillators are transformed into rotations. At this
	point the amplitude and the period of the oscillations of the order parameter
	reach their respective maximal values, while the
	modulus of the order parameter reaches its minimal value.  Note that the fast dips
	in the modulus of the order parameter can get arbitrarily close to $0$ provided that
	$n$ is large enough  (Figure~\ref{f.boa}\textbf{a}). At the transition point, the oscillators
	are most dispersed as they undergo fast transitions between their successive stays near  $\rho=1$
	(Figure~\ref{f.boa}\textbf{a}). Therefore, the use of the polar coordinates in the analysis
	of \eqref{cart} and the analysis of the heteroclinic bifurcation in the transformed system
	are essential for understanding macroscopic dynamics of the coupled system \eqref{KM}.


	\section{Discussion}\label{sec.discuss}
	\setcounter{equation}{0}

	In the present paper, we analyzed a modified KM with individual
	oscillators in the regime near a saddle-node on an invariant circle bifurcation.
	The modified model features a new type of collective dynamics with alternating
	phases of high and low coherence. Furthermore, the order parameter exhibits
	slow-fast oscillations, which reveal a pronounced separation of
	timescales in
	collective dynamics. For the most part
	of the period the value of the order parameter is close to $1$ corresponding to
	the coherent phase. The long periods of coherence are punctuated by brief intervals of
	highly incoherent collective dynamics. The Ott-Antonsen reduction and a careful
	analysis of the reduced system show that the salient features of the collective
	dynamics are explained by the model's proximity to a local
	Bogdanov-Takens bifurcation
	and a nonlocal heteroclinic bifurcation.
	In contrast to previously studied cases,  where one limit cycle or two limit cycles of opposite
	stability appear in a bifurcation of a
	homoclinic/heteroclinic contour (cf.~\cite{AAIS, Kuz-Bifurcations,
		KuzHoo2021, Dukov2018}), the heteroclinic bifurcation for the system at hand generates a 
	limit cycles on each side of the bifurcation. These limit cycles are
	topologically distinct (contractible versus noncontractible) and are either
	both stable or both unstable.
	
	\vskip 0.2cm
	\noindent
	{\bf Acknowledgements.} This work grew out of AP's Research Co-op at Drexel
	University. GSM and AP were supported in part by NSF grant DMS 2009233 (to GSM).
	MSM was supported by a Support of Scholarly Activities Grant at The College of New Jersey.
	
	\noindent\textbf{Data availability statement.} Data sharing is not
	applicable to this article as no datasets were generated or analyzed
	during the current study.

	\bibliographystyle{amsplain}
	\def\cprime{$'$} \def\cprime{$'$}
	\providecommand{\bysame}{\leavevmode\hbox to3em{\hrulefill}\thinspace}
	\providecommand{\MR}{\relax\ifhmode\unskip\space\fi MR }
	\providecommand{\MRhref}[2]{%
		\href{http://www.ams.org/mathscinet-getitem?mr=#1}{#2}
	}
	\providecommand{\href}[2]{#2}

\end{document}